\author{\authorblockN{ Derrick Wing Kwan Ng,~\IEEEmembership{Member,~IEEE}, Mohammad Shaqfeh, ~\IEEEmembership{Member,~IEEE}  Robert Schober,~\IEEEmembership{Fellow,~IEEE}, and Hussein Alnuweiri, ~\IEEEmembership{Senior Member,~IEEE}\\
\thanks{Derrick Wing Kwan Ng is with the School of Electrical Engineering and Telecommunications, The University of New South Wales, Sydney, Australia.  Robert Schober is with both The University of British Columbia, Vancouver, Canada and the Institute for Digital Communications, Friedrich-Alexander-University Erlangen-N\"urnberg (FAU), Germany. Mohammad Shaqfeh and Hussein Alnuweiri are with  Texas A\&M University at Qatar,  Qatar.
This paper has been presented  in part at the IEEE WCNC 2014 \cite{CN:Kwan_layered_WCNC2014}.
This research was supported by the Qatar National Research Fund (QNRF), under project NPRP 5-401-2-161. Email: \{w.k.ng@unsw.edu.au,  rschober@ece.ubc.ca,  mohammad.shaqfeh@qatar.tamu.edu, hussein.alnuweiri@qatar.tamu.edu\}}}

}
\title{Robust Layered Transmission in Secure MISO Multiuser Unicast Cognitive Radio Systems}
\newtheorem{Thm}{Theorem}
\newtheorem{Lem}{Lemma}
\newtheorem{Prop}{Proposition}
\newtheorem{Remark}{Remark}
\DeclareMathOperator{\Tr}{Tr}
\DeclareMathOperator{\Rank}{\mathrm{Rank}}
\DeclareMathOperator{\zero}{\mathbf{0}}
\DeclareMathOperator{\vect}{\mathrm{vec}}
\DeclareMathOperator{\mino}{\mathrm{minimize}}
\newcommand{\textoverline}[1]{$\overline{\mbox{#1}}$}
\newcommand{\textovertide}[1]{$\widetilde{\mbox{#1}}$}
\DeclareMathOperator{\bigo}{\cal O}
\newcommand{\Rmnum}[1]{\expandafter\@slowromancap\romannumeral #1@}
\newcommand{\abs}[1]{\lvert#1\rvert}
\newcommand{\norm}[1]{\lVert#1\rVert}
\newcolumntype{L}{>{\arraybackslash\raggedright}m{3.5cm}}
\newcolumntype{M}{>{\arraybackslash}m{7cm}}
\begin{document}

\maketitle
%\vspace*{-4mm}
\begin{abstract}
This paper studies robust resource allocation algorithm design for a multiuser multiple-input single-output (MISO) cognitive radio (CR)  downlink communication network.  We focus on a  secondary system which provides wireless unicast
secure  layered video  information  to multiple single-antenna secondary receivers. The resource allocation algorithm
design is formulated as a non-convex optimization problem
for the minimization of the total transmit power at the secondary transmitter. The proposed framework
 takes into account a quality of service (QoS) requirement regarding video communication secrecy in the secondary system, the imperfection of the  channel state information (CSI) of potential eavesdroppers (primary receivers) at the secondary transmitter, and a limit for the maximum tolerable received interference power at the primary receivers. Thereby, the proposed problem formulation exploits the  \emph{self-protecting} architecture of layered
 transmission and artificial noise generation  to ensure communication secrecy. {The considered non-convex optimization problem is recast as a convex optimization
problem via semidefinite programming (SDP) relaxation. It is shown that the global optimal solution of
the original problem can be constructed by exploiting both the primal and the dual optimal solutions of the
SDP relaxed problem. Besides, two suboptimal resource allocation schemes  are proposed for the case
when the solution of the dual problem is unavailable for constructing the optimal solution.}
 Simulation results demonstrate  significant transmit power savings and robustness against CSI imperfection for the proposed optimal and suboptimal resource allocation algorithms employing layered transmission compared to  baseline schemes employing traditional single-layer transmission.
\end{abstract}
\begin{keywords}Layered transmission, physical layer security, cognitive radio,  non-convex optimization.
\end{keywords}
%\renewcommand{\baselinestretch}{1.45}
%\large\normalsize
\section{Introduction}
\label{sect1}
 \IEEEPARstart{I}{n} recent years, the rapid  expansion of  high data  rate and secure multimedia services in wireless communication networks  has led to a tremendous demand for energy and bandwidth. The amount of video
traffic is expected to double annually in the near future and will be the main source of  wireless Internet traffic \cite{cisco_video}. As a result,  scalable video coding (SVC)  \cite{JR:Video_layers,JR:Video_layers2} has been proposed for video information encoding which provides high flexibility in resource allocation. In particular,  successive refinement coding (SRC) is one of the common multimedia SVC  techniques. In SRC, a video signal is encoded into a hierarchy of  multiple layers with unequal importance, namely one
 base layer and several enhancement layers. The base layer contains the essential information of the video with minimum video quality. The information embedded in each enhancement layer is used to successively  refine the description of the pervious layers.  The structure of layered transmission  facilitates the implementation of unequal error protection. In fact, SRC provides a high flexility to service providers since the transmitter can achieve a better resource utilization  by  allocating different
powers  to different information layers depending on the required video quality. Besides, layered transmission with SRC  has been implemented in some existing video standards such as  H.264/Moving Picture Experts Group (MPEG)-4 \cite{JR:video_overview}.

 Recently, resource allocation algorithm design for layered transmission has been pursued  for wireless communication systems. In  \cite{JR:Layer_transmission1}, power allocation for
layered transmission with successive enhancement was investigated. Subsequently, this study was extended to the joint design of rate and power allocation in \cite{JR:Layer_transmission2}.   A real-time adaptive resource allocation algorithm for layered video streaming was designed in  \cite{JR:Layer_transmission4} for multiple access networks. An information combing scheme for  double-layer video transmission over decode-and-forward wireless relay networks was proposed in \cite{JR:Layer_transmission3}.  {
Furthermore, in \cite{JR:layer_new_1}, a bandwidth allocation scheme was proposed to maximize the bandwidth utilization for scalable video services
over wireless cellular networks. The authors in \cite{JR:layer_new_2} investigated the resource allocation algorithm design for layer-encoded television signals for wideband communication systems.   Power allocation algorithms were proposed for amplify-and-forward and decode-and-forward communication networks with layered coding in \cite{CN:new_layer_2} and \cite{CN:new_layer_1}, respectively. In \cite{JR:layer_new_3}, a suboptimal multicast user grouping strategy was developed to exploit multiuser diversity in multiuser video transmission systems employing scalable video coding. However, the
resource allocation algorithms in  \cite{JR:Layer_transmission1}--\cite{JR:layer_new_3} were designed for single-antenna transmitters and/or for long-term average design objectives and may not be applicable to delay-sensitive applications and multiple-antenna systems.}

 In the past decades, multiple-input multiple-output (MIMO) technology has emerged as one of the most prominent solutions in reducing the system power consumption. In particular, MIMO provides  extra spatial degrees of freedom for resource allocation \cite{book:david_wirelss_com}--\nocite{JR:TWC_large_antennas,JR:MIMO_layered,JR:MIMO_layered2}\cite{CN:MISO_layer} which facilitates a trade-off between multiplexing and diversity.  On the other hand,  cognitive radio (CR) was proposed as a possible solution for improving spectrum utilization \cite{JR:CR_overview,JR:CR_1}.   CR enables a secondary system to dynamically access the spectrum of a primary  system  if  the
interference from the secondary system is controlled such that it does not  severely degrade the quality of service
(QoS) of the primary system  \cite{Report:CR}. However, the broadcast nature of CR networks makes them vulnerable to eavesdropping.  For instance, illegitimate users or misbehaving legitimate users  of a communication system may attempt to use high definition video  services without paying by overhearing the video signal. Conventionally,   secure communication employs  cryptographic encryption algorithms implemented in the application layer. However, the associated required secrete key distribution and management can be problematic or infeasible in wireless networks. Besides, with the development of  quantum computing, the commonly used encryption algorithms may become eventually breakable with a brute force approach.  As a result,  physical (PHY) layer
security \cite{Report:Wire_tap}--\nocite{JR:Artifical_Noise1,JR:Kwan_physical_layer}\cite{JR:Kwan_EE_PHY} has been proposed as a complement to the traditional secrecy methods for improving wireless transmission security. The merit of PHY layer security lies in the guaranteed  perfect secrecy of communication, even if the eavesdroppers have unbounded computational capability.
 In  \cite{Report:Wire_tap}, Wyner showed
 that a non-zero secrecy capacity,
defined as the maximum transmission rate at which an eavesdropper
is unable to extract any information from the received signal, can be achieved if
the desired receiver enjoys better channel conditions than the
eavesdropper.   In \cite{JR:Artifical_Noise1} and  \cite{JR:Kwan_physical_layer}, artificial noise generation
was exploited for multiple-antenna transmitters to weaken
 the information interception capabilities  of the eavesdroppers. In particular, artificial noise is transmitted concurrently with the information signal in  \cite{JR:Artifical_Noise1} and  \cite{JR:Kwan_physical_layer} for the maximization of the ergodic secrecy capacity and the outage secrecy capacity, respectively.
 In \cite{JR:Kwan_EE_PHY}, a joint power and subcarrier allocation algorithm was proposed for the maximization of the system energy efficiency of wideband communication systems  while providing communication secrecy.
{ The combination of CR and physical layer security was investigated  in  \cite{JR:CR_PHY_tuts}--\nocite{JR:CR_PHY_maga,JR:CR_PHY_1,JR:CR_PHY_2,JR:CR_PHY_3,JR:PHY_CR,JR:PHY_CR2,JR:PHY_CR3}\cite{JR:MOOP}.
In \cite{JR:CR_PHY_1}, the authors investigated the secrecy outage probability of CR systems in the presence of a passive eavesdropper.  In \cite{JR:CR_PHY_2} and \cite{JR:CR_PHY_3}, precoding and  beamforming schemes  were designed to ensure communication security for MIMO multiple eavesdropper (MIMOME) CR networks and cooperative  CR networks, respectively.  The authors in \cite{JR:PHY_CR} studied robust transmitter
designs for secure CR networks. In \cite{JR:PHY_CR2}, secure multiple-antenna transmission strategies were proposed to maximize the secrecy outage capacity of CR networks in slow fading. In \cite{JR:PHY_CR3}, the secrecy outage and diversity performances of CR systems were studied.
In \cite{JR:MOOP}, multiple objective optimization was adopted for CR networks to study the trade-off between  the interference leakage to the primary network and the transmit power of the secondary transmitter.
  However, the results in \cite{Report:Wire_tap}--\nocite{JR:PHY_CR2}\cite{JR:MOOP} did not exploit the properties of the targeted applications and may not be applicable for multimedia services.} Nevertheless, as soliciting multimedia services over the wireless medium becomes more popular, there is an emerging need for guaranteeing secure wireless video communication. In fact, as will be shown in this paper, the
layered information architecture  of video signals has a
\emph{self-protecting structure} which provides a certain robustness against  eavesdropping. To the best of the authors knowledge, exploiting the  layered transmission architecture of video signals for facilitating PHY layer security   has not been considered in the literature before. The notion  of  secure communication in layered (non-CR) transmission systems has been studied in our preliminary work in \cite{CN:Kwan_layered_WCNC2014}. Specifically, a power allocation algorithm was designed for the minimization of the transmit power under a communication secrecy constraint for a single video receiver. Yet, as availability of perfect CSI of the primary users at the secondary transmitter was assumed in \cite{CN:Kwan_layered_WCNC2014}, the resulting design advocates the generation of strong artificial noise/inteference to ensure secure video communication.  This may cause a significant performance degradation for the primary receivers if the results of \cite{CN:Kwan_layered_WCNC2014} are directly applied in CR networks having imperfect CSI  of the primary receivers.

{
 In this paper, we address the above issues and the corresponding contributions can be summarized as follows:
\begin{itemize}
\item We propose a non-convex  optimization problem formulation for the minimization of the total transmit power for layered video transmission to multiple secondary receivers. The proposed framework takes into account the  imperfection of the CSI of  the potential eavesdroppers (primary receivers) and exploits the   inherent \emph{self-protecting} structure of layered transmission for guaranteeing secure communication  to the secondary receivers and controlling the interference leakage to the multiple-antenna primary receivers.
    \item The considered non-convex optimization problem  is recast as a convex optimization problem via  semidefinite programming (SDP) relaxation. We prove that the global optimal solution of the original problem can be constructed based on the solutions of the  primal and the dual versions of the SDP relaxed problem.
    \item Two suboptimal resource allocation schemes are proposed for the case when the solution of the dual problem of the SDP relaxed problem is unavailable for construction of the optimal solution.
    \item Our simulation results show that the proposed algorithms exploiting layered transmission enable  significant transmit power savings in providing secure video communication for the secondary receivers compared to two baseline schemes employing traditional single-layer transmission.
\end{itemize}}

The rest of the paper is organized as follows. In Section \ref{sect:OFDMA_AF_network_model},
we outline the model for the considered secure layered video transmission. In Section \ref{sect:forumlation}, we formulate the
resource allocation algorithm design as an optimization
problem, and we solve this problem by  semidefinite programming relaxation in
Section \ref{sect:solution}. In Section \ref{sect:simulation}, we present numerical performance
results for the proposed optimal and suboptimal algorithms for secure video transmission. In Section \ref{sect:conclusion},
we conclude with a brief summary of our results.

\section{System Model}
\label{sect:OFDMA_AF_network_model}
In this section,  we  present the adopted system model for secure layered video transmission.

\subsection{Notation}
We use boldface capital and lower case letters to denote matrices and vectors, respectively. $\mathbf{A}^H$, $\Tr(\mathbf{A})$, $\mathbf{A}^{\frac{1}{2}}$, $\Rank(\mathbf{A})$, and $\det(\mathbf{A})$ represent the Hermitian transpose, trace, square-root, rank, and determinant of  matrix $\mathbf{A}$, respectively; $\vect(\mathbf{A})$ denotes the vectorization of matrix $\mathbf{A}$ by stacking its columns from left to right to form a column vector;
$\mathbf{A}\otimes \mathbf{B}$ denotes the Kronecker product of matrices $\mathbf{A}$ and $ \mathbf{B}$; $[\mathbf{B}]_{a:b,c:d}$ returns the $a$-th to the $b$-th rows and the $c$-th to the $d$-th columns block submatrix of $\mathbf{B}$;  $\mathbf{A}\succ \mathbf{0}$ and $\mathbf{A}\succeq \mathbf{0}$ indicate that $\mathbf{A}$ is a positive definite and a  positive semidefinite matrix, respectively; $\lambda_{\max}(\mathbf{A})$ denotes the maximum eigenvalue of matrix $\mathbf{A}$; $\mathbf{I}_N$ is the $N\times N$ identity matrix; $\mathbb{C}^{N\times M}$ denotes the set of all $N\times M$ matrices with complex entries; $\mathbb{H}^N$ denotes the set of all $N\times N$ Hermitian matrices.  The circularly symmetric complex Gaussian (CSCG) distribution is denoted by ${\cal CN}(\mathbf{m},\mathbf{\Sigma})$ with mean vector $\mathbf{m}$ and covariance matrix $\mathbf{\Sigma}$; $\sim$ indicates ``distributed as"; $\abs{\cdot}$, $\norm{\cdot}$, and $\norm{\cdot}_F$ denote the absolute value of a complex scalar, the Euclidean norm, and the Frobenius norm of a vector/matrix, respectively; $\mathrm{Re}\{\cdot\}$ denotes the real part of an input complex number and  $[x]^+=\max\{0,x\}$.

%%%%%%%%%%%%%%%%%%%%%%%%%%%%%%%

\subsection{Channel  Model}
%%%%%%%%%%%%%%%%%%%%%%%%%%%%%%%
We consider a CR secondary network. There are one secondary transmitter equipped with $N_{\mathrm{T}}>1$ antennas, a primary transmitter equipped with $N_{\mathrm{P}_\mathrm{T}}$ antennas,  $K$ legitimate secondary video receivers, and $J$ primary receivers. The secondary receivers and  the primary receivers  share the same spectrum concurrently, cf.  Figure \ref{fig:system_model}. The secondary receivers are low  complexity single-antenna devices for decoding the video signal. On the other hand, each primary receiver is
equipped with $N_{\mathrm{P}_\mathrm{R}}> 1$ antennas. We assume that $N_{\mathrm{T}}>N_{\mathrm{P}_\mathrm{R}}$ in this paper. In every time instant, the  transmitter conveys  $K$ video information signals to $K$ unicast secondary video  receivers.  The unicast scenario is applicable for on-demand video streaming service and provides high flexibility to the end-users. However, the transmitted video signals for each secondary receiver may be overheard by  primary receivers and unintended secondary receivers which share the same spectrum simultaneously.  \begin{figure}[t]
 \centering
\includegraphics[width=3.5in]{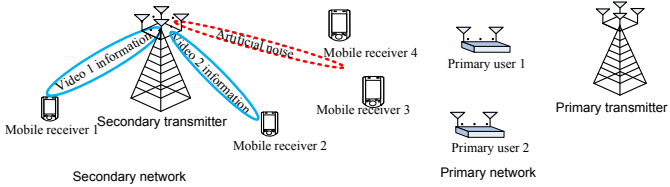}
 \caption{A CR network where $K=4$ secondary receivers  share the same spectrum with $J=2$ primary receivers. The secondary transmitter conveys video information to the  secondary receivers. The red dotted ellipsoid illustrates the  use of artificial noise for providing communication security in the secondary network.  } \label{fig:system_model}
\end{figure}
In
practice, it is possible that some receivers are malicious and
eavesdrop the video information of the other subscribers, e.g. a
paid multimedia video service, by overhearing the video signal
transmitted by the secondary transmitter.  As a result, the $J$ primary receivers and unintended secondary receivers are
potential eavesdroppers which should be taken into account for resource allocation algorithm design for providing secure
communication.   We focus on frequency flat fading channels. The downlink
received signals at secondary video receiver $k\in\{1,\ldots,K\}$ and primary receiver $j\in\{1,\dots,J\}$ are given by
{
\begin{eqnarray}
y_k&=&\mathbf{h}^{H}_k \mathbf{x}+ \mathbf{t}^{H}_k \sum_{j=1}^J\mathbf{d}_j+n_{\mathrm{s}_k}\\
\mbox{and}\,\,\mathbf{y}_{j}^{\mathrm{PU}}&=&\mathbf{P}^{H}_j  \sum_{j=1}^J\mathbf{d}_j+\mathbf{G}_j^H\mathbf{x}+\mathbf{z}_{\mathrm{s}_j},
\end{eqnarray}
respectively, where $\mathbf{x}\in\mathbb{C}^{N_{\mathrm{T}}\times1}$ denotes the transmitted signal vector of the secondary transmitter. The channel vector between the secondary transmitter and  secondary receiver $k$ is denoted by $\mathbf{h}_k\in\mathbb{C}^{N_{\mathrm{T}}\times1}$. The channel  matrix between the secondary transmitter and  primary receiver $j$ is denoted by $\mathbf{G}_j\in\mathbb{C}^{N_{\mathrm{T}}\times N_{\mathrm{P}_\mathrm{R}}}$.  $\mathbf{d}_j\in\mathbb{C}^{N_{\mathrm{P}_\mathrm{T}}\times1}$ denotes the precoded information data vector at the primary transmitter intended for primary receiver $j$. $\mathbf{t}_k\in\mathbb{C}^{N_{\mathrm{P}_\mathrm{T}}\times1}$  represents the channel vector between the primary transmitter and secondary receiver $k$ while $\mathbf{P}_j\in\mathbb{C}^{N_{\mathrm{P}_\mathrm{T}}\times N_{\mathrm{P}_\mathrm{R}}}$ is the channel matrix between the primary transmitter and primary receiver $j$.  The equivalent noise at the  secondary receivers, which comprises the joint effect of  the received interference from the primary transmitter, i.e., $\mathbf{t}^{H}_k \sum_{j=1}^J\mathbf{d}_j$,  and thermal noise, $n_{\mathrm{s}_k}$, is modeled as additive white Gaussian
noise (AWGN)  with zero mean and variance $\sigma_{\mathrm{s}_k}^2={\cal E} \{\abs{\mathbf{t}^{H}_k \sum_{j=1}^J\mathbf{d}_j}^2+\abs{n_{\mathrm{s}_k}}^2\}$.
 $\mathbf{z}_{\mathrm{s}_j}\sim{\cal CN}(\mathbf{0},\sigma_{\mathrm{PU}_j}^2\mathbf{I}_{N_{\mathrm{P}_\mathrm{R}}})$ is the AWGN at the primary receivers.}
 \begin{figure}[t]
 \centering
\includegraphics[width=3.5in]{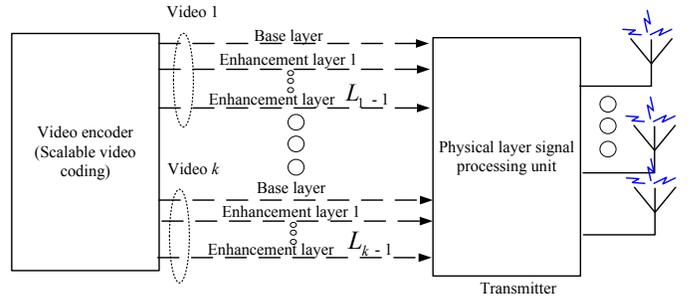}
 \caption{An illustration of layered coding for wireless video transmission.    } \label{fig:system_model_layer_coding}
\end{figure}
{
\begin{Remark}
In this paper, we assume that  the primary
network is a legacy system  and the primary transmitter does not actively participate in
transmit power control. Furthermore, we assume that the primary transmitter transmits a Gaussian signal and we focus on quasi-static fading channels such that all channel gains remain constant within the coherence time of the secondary system.  These assumptions justify modelling the interference from the primary transmitter to the secondary receivers as  AWGN. The total noise power, $\sigma_{\mathrm{s}_k}^2$, may be different  for different secondary receivers $k\in\{1,\ldots,K\}$. This model has been commonly adopted in the literature for resource allocation algorithm design \cite{JR:CR_beamforming_1}\nocite{JR:CR_phy}--\cite{CN:karama}.
\end{Remark}}
\subsection{Video Encoding and Artificial Noise}
Layered video encoding based on SRC is adopted to encode the video information, cf. Figure \ref{fig:system_model_layer_coding}. Specifically,
 the video source intended for secondary receiver $k$ is encoded  into $L_k$ layers at the secondary transmitter  and  the data rate of each layer is fixed, cf. H.264/SVC  \cite{JR:Video_layers,JR:Video_layers2}. The video information for  secondary receiver $k$ can be represented  as $\mathbf{S}_k=\big[s_{1,k},s_{2,k},\ldots,s_{l,k}\ldots,s_{L_k,k}\big]$, where $s_{l,k}\in\mathbb{C}$ denotes the video information of layer $l$ for secondary receiver $k$. For the video signal of receiver $k$,  the $L_k$ layers include one base layer, i.e., $s_{1,k}$, which can be decoded independently without utilizing the information
from the upper layers. Specifically, the base layer data includes
the most essential information of the video and can guarantee a basic video quality. The remaining $L_k-1$ layers, i.e., $\{s_{2,k},\ldots,s_{L_k,k}\}$, are  enhancement layers which are used to successively refine the  decoded lower layers. { In other words,
the video information embedded  in the enhancement layers cannot be retrieved independently; if
the decoding of the base layer fails, the information embedded in
the following enhancement layers is lost since it cannot be recovered.} Furthermore,  in order to provide communication security,  artificial noise is transmitted along with the information signals. Hence, the transmit symbol vector $\mathbf{x}$ can be expressed  as
\begin{eqnarray}
\mathbf{x}=\hspace*{-2mm}\underbrace{\sum_{k=1}^K \sum_{l=1}^{L_k}\mathbf{w}_{l,k} s_{l,k}}_{\mbox{layered video signals}}+\underbrace{\mathbf{v}}_{\mbox{artifical noise}},
\end{eqnarray}
where $\mathbf{w}_{l,k}\in\mathbb{C}^{N_\mathrm{T}\times 1}$ is the beamforming vector for the video information in layer $l$ dedicated to  desired receiver $k$. We note that  superposition coding is used to superimpose the $L_k$ video information layers. $\mathbf{v}\in\mathbb{C}^{N_\mathrm{T}\times 1}$ is an artificial noise vector generated to facilitate secure communication. In particular,  $\mathbf{v}$ is modeled as  a complex Gaussian random vector, i.e., $\mathbf{v}\sim {\cal CN}(\mathbf{0}, \mathbf{V})$, where $\mathbf{V}$ denotes the covariance matrix of the artificial noise. Hence,  $ \mathbf{V}$ is a  positive semidefinite Hermitian matrix, i.e.,  $\mathbf{V}\in \mathbb{H}^{N_\mathrm{T}}, \mathbf{V}\succeq \mathbf{0}$.

\section{Resource Allocation Algorithm Design}\label{sect:forumlation}
In this section, we present the adopted performance metrics and the problem formulation.

\subsection{Achievable Rate and Secrecy Rate}
\label{subsect:Instaneous_Mutual_information}
%%%%%%%%%%%%%%%%%%%%%%%%%%%%%%%%%%%%%%%%%%%%%%%%%%%
We assume that perfect CSI is available at the secondary video receivers. Besides,  successive interference cancellation (SIC) \cite{book:david_wirelss_com} is performed at the receivers for decoding video information. Thereby, before decoding the information in layer $l$,  the  receivers  first decode the video information in layers $\{1,\ldots,l-1\}$ and cancel the corresponding interference successively.  Therefore,  the instantaneous achievable rate between the transmitter and primary video receiver $k$ in  layer $l\in\{1,\ldots,L_k\}$ is given by
\begin{eqnarray}
C_{l,k}&=&\log_2\Big(1+\Gamma_{l,k}\Big),\,\,\\
\Gamma_{l,k}&=&\frac{\abs{\mathbf{h}^H_k\mathbf{w}_{l,k}}^2}{\Psi_{l,k}+\Tr(\mathbf{h}_k^H\mathbf{V}\mathbf{h}_k)+
\sigma_{\mathrm{s}_k}^2},\label{eqn:cap} \,\,\,\,
\mbox{and}\\
\Psi_{l,k}&=&\underbrace{ \overset{K}{{\underset{n\ne k}{\sum}}}\overset{L_n}{\underset{r=1}{\sum}} \abs{\mathbf{h}^H_k\mathbf{w}_{r,n}}^2}_{\mbox{multiuser interference}}+\underbrace{\overset{L_k}{\underset{t=l+1}{\sum}} \abs{\mathbf{h}^H_k\mathbf{w}_{t,k}}^2}_{\mbox{multilayer interference}},
\end{eqnarray}
where $\Gamma_{l,k}$ is the received signal-to-interference-plus-noise ratio (SINR) of layer $l$ at secondary video receiver $k$.

On the other hand, it is possible that secondary video receiver $t$ attempts to decode the video information intended for secondary receiver $k$ after decoding its own video information. Hence, secondary video receiver $t$ is treated as a potential eavesdropper with respect to the video information of secondary video receiver $k$.  The instantaneous achievable rate between the transmitter of secondary receiver $k$ and  secondary receiver $t$ in decoding layer $l\in\{1,\ldots,L_k\}$  is given by
\begin{eqnarray}\label{eqn:cap_unintended_users}
C_{l,k}^t&=&\log_2\Big(1+\Gamma_{l,k}^t\Big)\,\,\,\,
\mbox{and}\,\,\\
\Gamma_{l,k}^t&=&\frac{\abs{\mathbf{h}^H_t\mathbf{w}_{l,k}}^2}{I_{l,k}^t+\Tr(\mathbf{h}_t^H\mathbf{V}\mathbf{h}_t)+\sigma_{\mathrm{s}_t}^2}\label{eqn:cap_SINR} \\
I_{l,k}^t&=&\overset{K}{\underset{n\ne t}{\underset{n\ne k}{\sum}}}\overset{L_n}{\underset{r=1}{\sum}} \abs{\mathbf{h}^H_t\mathbf{w}_{r,n}}^2+\overset{L_k}{\underset{m=l+1}{\sum}} \abs{\mathbf{h}^H_t\mathbf{w}_{m,k}}^2 .
\end{eqnarray}
 It can be observed from (\ref{eqn:cap_unintended_users}) that layered transmission has a \emph{self-protecting structure}.  Specifically,  considering the first term in the denominator of (\ref{eqn:cap_SINR}), the higher layer information has the same effect as the artificial noise signal  $\mathbf{v}$ in protecting the information encoded in the
lower layers of the video signal. It is expected that by carefully optimizing the beamforming vectors of the higher information layers, a certain level of communication security can be achieved in the lower layers.

Besides, the transmitted video signals are also overheard by the primary receivers due to the broadcast nature of the wireless communication channel.
Therefore, the achievable rate between the transmitter and
primary receiver $j$ for decoding the $l$-th layer signal of  secondary receiver
$k$ can be represented as
\begin{eqnarray}\label{eqn:Capacity_eve}
\hspace*{-10mm}C_{{l,k}}^{\mathrm{PU}_j}\hspace*{-2.5mm}&=&
\hspace*{-2.5mm}\log_2\det\hspace*{-0.5mm}\Big(\mathbf{I}_{N_{\mathrm{P}_\mathrm{R}}}\hspace*{-0.5mm}+\hspace*{-0.5mm}
\mathbf{\Lambda}_{j,k}^{-1}
\mathbf{G}_j^H\mathbf{w}_{l,k}\mathbf{w}_{l,k}^H\mathbf{G}_j\hspace*{-0.5mm}\Big)\, \mbox{where}\\ \label{eqn:Sigma}
\mathbf{\Lambda}_{j,k}\hspace*{-2.5mm}&=&\hspace*{-2.5mm}\mathbf{\Sigma}_{j}+\underbrace{\sum_{n\ne k}^K \sum_{r=1}^{L_n}\mathbf{G}_j^H\mathbf{w}_{r,n}\mathbf{w}_{r,n}^H\mathbf{G}_j}_{\mbox{multiuser interference}}\notag\\
\hspace*{-2.5mm} &+& \hspace*{-2.5mm}\underbrace{\sum_{m=l+1}^{L_k}\mathbf{G}_j^H\mathbf{w}_{m,k}\mathbf{w}_{m,k}^H\mathbf{G}_j}_{\mbox{multilayer interference}},\\
\mathbf{\Sigma}_{j}\hspace*{-2.5mm}&=&\hspace*{-2.5mm}\mathbf{G}_j^H\mathbf{V}\mathbf{G}_j+\sigma_{\mathrm{PU}_j}^2\mathbf{I}_{N_{\mathrm{P}_\mathrm{R}}}\succ \zero.
\end{eqnarray}
In practice, the behavior of the primary receivers  cannot be fully controlled by the secondary transmitter and it is possible that some primary receivers are malicious and eavesdrop the video information intended for the secondary receivers. { Hence, for ensuring communication security, the primary receivers are also treated as potential eavesdroppers
who attempt to decode the messages intended for all $K$
desired secondary receivers. Thereby, the primary user may be able to first decode the channel coded and modulated data symbols of the enhancement layers, i.e., $s_{l,k},l\in\{2,\ldots,L_k\}$, and then remove the associated interference before decoding the base layer, i.e., $s_{1,k}$, and retrieving the embedded video information\footnote{ We note that without knowledge of the base layer, an eavesdropper cannot reconstruct the video signal based on $s_{l,k},l\in\{2,\ldots,L_k\},$ because of the layered video coding. However, knowledge of $s_{l,k},l\in\{2,\ldots,L_k\}$, is beneficial for channel decoding of  $s_{1,k}$.  }.   Therefore, we focus on the worst-case scenario regarding the
decoding capability of the primary receivers for providing communication security to the secondary receivers. In particular, we assume that primary receiver $j$ performs
SIC to remove all multiuser interference and the multilayer interference from the upper layers before decoding the message of layer $l$ of secondary receiver $k$.}

As a result, the achievable rate in (\ref{eqn:Capacity_eve}) for decoding the first layer is bounded above by
\begin{eqnarray}\label{eqn:Capacity_eve_up}
\widetilde{C}_{{1,k}}^{\mathrm{PU}_j}=\log_2\det\Big(\mathbf{I}_{N_{\mathrm{P}_\mathrm{R}}}+\mathbf{\Sigma}_{j}^{-1}
\mathbf{G}_j^H\mathbf{w}_{1,k}\mathbf{w}_{1,k}^H\mathbf{G}_j\Big).
\end{eqnarray}
Thus, the secrecy rate  \cite{JR:Artifical_Noise1} between the transmitter
and secondary receiver $k$ on layer $1$  is given by
\begin{eqnarray}\label{eqn:secrecy_cap}
C_{\mathrm{sec}_{1,k}}=\Big[C_{1,k} -\underset{\overset{\forall t\ne k}{\forall j}}{\max}\,\{C_{1,k}^t,\widetilde{C}_{{1,k}}^{\mathrm{PU}_j}\}\Big]^+.
\end{eqnarray}
$C_{\mathrm{sec}_{1,k}}$  quantifies
the maximum achievable data rate at which a transmitter can reliably send  secret
information on layer $1$ to secondary receiver $k$ such that the potential eavesdroppers are unable to decode the received signal \cite{Report:Wire_tap}.

%\begin{Remark}
%We note that if secondary receiver $t$  is also able to cancel the multiuser interference and multilayer interference before eavesdropping the information of layer $l$ of receiver $k$, we can use equation (\ref{eqn:Capacity_eve_up}) instead of equation (\ref{eqn:Capacity_eve}) to denote the achievable rate between the transmitter and secondary receiver $t$ in
%decoding layer $l$ of secondary receiver $k$ without loss of generality.
%\end{Remark}

\subsection{Channel State Information}
%%%%%%%%%%%%%%%%%%%%%%%%%%%%%%%%%%%%%%%%%%%%%%%%%%%%%
 In this paper, we focus on a Time Division
Duplex (TDD)  communication system with slowly time-varying channels.  At the
beginning of each time slot,  handshaking is performed between the secondary transmitter and the secondary receivers. As a result, the downlink CSI of the secondary transmitter to the secondary receivers can be obtained by measuring the uplink training sequences embedded in the handshaking signals. Thus, we assume that the secondary-transmitter-to-secondary-receiver fading gains, $\mathbf{h}_k$, can be reliably estimated at the secondary transmitter with negligible estimation error.

On the other hand,   the primary receivers may not directly interact with the secondary transmitter. Besides, the primary receivers may be silent for a long period of time due to bursty data transmission. As a result,  the CSI of the primary receivers  can  be  obtained only occasionally at the secondary transmitter when the primary receivers communicate with the primary transmitter. Hence, the CSI for the idle  primary receivers may be outdated when the secondary transmitter performs resource allocation.  We adopt a deterministic model \cite{JR:Robust_error_models1}--\nocite{JR:Robust_error_models2,JR:CSI-determinisitic-model,JR:Kwan_secure_imperfect}\cite{JR:SWIPT_DAS}  to characterize the impact of the CSI imperfection on  resource allocation design. The  CSI of the link between the secondary transmitter
and primary receiver $j$ is modeled as
\begin{eqnarray}\label{eqn:outdated_CSI}
\mathbf{G}_j&=&\mathbf{\widehat G}_j + \Delta\mathbf{G}_j,\,   \forall j\in\{1,\ldots,J\}, \mbox{   and}\\
{\Psi }_j&\triangleq& \Big\{\Delta\mathbf{G}_j\in \mathbb{C}^{N_{\mathrm{P}_\mathrm{R}}\times N_{\mathrm{T}}}  :\norm{\Delta\mathbf{G}_j}_F^2 \le \varepsilon_j^2\Big\},\forall j,\label{eqn:outdated_CSI-set}
\end{eqnarray}
where $\mathbf{\widehat G}_j\in\mathbb{C}^{N_{\mathrm{P}_\mathrm{R}}\times N_{\mathrm{T}}}$ is the  matrix CSI estimate of the channel of primary receiver $j$ that is available at the secondary transmitter. $\Delta\mathbf{G}_j$ represents the unknown channel uncertainty due to the time varying nature of the channel during transmission. In particular,  the continuous set ${\Psi }_j$ in (\ref{eqn:outdated_CSI-set})  defines a continuous space spanned by all possible channel uncertainties and $\varepsilon_j$ represents the maximum value of the norm of the CSI estimation error matrix  $ \Delta\mathbf{G}_j$ for primary receiver $j$.
%%%%%%%%%%%%%%%%%%%%%%%%%%%%%%%%%%%%%%%%
\subsection{Optimization Problem Formulation}
\label{sect:cross-Layer_formulation}
%%%%%%%%%%%%%%%%%%%%%%%%%%%%%%%%%%%%%%%%%%%%%%%%%%%%%%%
The system design objective is to minimize
the total transmit power of the secondary transmitter while providing QoS for both the secondary receivers and the primary receivers\footnote{{
We note that the performance of the considered system serves as an upper bound for the performance of a system where also the CSI of the secondary network is imperfect. The study of the impact of imperfect CSI of the secondary network on  performance is left for  future work.}}. The optimal resource allocation policy $\{\mathbf{w}_{l,k}^*,\mathbf{V}^*\}$ can be obtained by solving
\begin{eqnarray}
\label{eqn:cross-layer}&&\hspace*{5mm} \underset{\mathbf{V}\in \mathbb{H}^{N_\mathrm{T}},\mathbf{w}_{l,k}
}{\mino} \,\, \sum_{k=1}^K\sum_{l=1}^{L_k}\norm{\mathbf{w}_{l,k}}^2+\Tr(\mathbf{V})\\
\notag \mbox{s.t. } \hspace*{-1mm}&&\hspace*{-5mm}\mbox{C1: }\notag \Gamma_{l,k} \ge \Gamma_{\mathrm{req}_{l,k}}, \forall l,\forall k, \\
\hspace*{-1mm}&&\hspace*{-5mm}\mbox{C2: } \notag\Gamma_{1,k}^t  \le \Gamma_{\mathrm{tol}}, \forall t\ne k, t\in\{1,\ldots,K\} , \\
\hspace*{-1mm}&&\hspace*{-5mm}\mbox{C3:}\,\, \max_{\norm{\Delta\mathbf{G}_j}_F\in {\Psi}_j}\,\, \Tr\Big(\mathbf{G}_j^H(\mathbf{V}+\sum_{k=1}^K \sum_{l=1}^{L_k} \mathbf{w}_{l,k}\mathbf{w}_{l,k}^H)\mathbf{G}_j\Big)\notag\\
\hspace*{-1mm}&&\hspace*{-5mm}\le P_{\mathrm{I}_j},\forall j\in\{1,\ldots,J\},\notag\\
\hspace*{-1mm}&&\hspace*{-5mm}\notag\mbox{C4:}\,\, \max_{\norm{\Delta\mathbf{G}_j}_F\in {\Psi}_j}\,\, \widetilde{C}_{{1,k}}^{\mathrm{PU}_j}\le R_{\mathrm{Eav}_{j,k}}\notag,\forall j, \forall k,\\
\hspace*{-1mm}&&\hspace*{-5mm}\mbox{C5:}\,\, \mathbf{V}\succeq \mathbf{0}\notag.
\end{eqnarray}
 Here, $\Gamma_{\mathrm{req}_{l,k}}$ in C1 is the minimum required SINR  for decoding layer $l$ at receiver $k$.
 In C2, $\Gamma_{\mathrm{tol}}$ denotes the  maximum tolerated received SINR of layer $1$ at the unintended secondary receivers for decoding layer $1$ of a video signal intended for another receiver.  { Since layered video coding is employed,    it is sufficient to protect the  first layer of each video signal of each secondary receiver against  eavesdropping. In other words, the video information embedded in the enhancement layers is secure as long as the video information encoded  in the base layer is secure.}
 \begin{figure*}[!t]
\setcounter{equation}{17}
\begin{eqnarray}
\label{eqn:SDP}&&\hspace*{6mm} \underset{\mathbf{W}_{l,k},\mathbf{V}\in \mathbb{H}^{N_\mathrm{T}}
}{\mino} \,\, \sum_{k=1}^K\sum_{l=1}^{L_k}\Tr(\mathbf{W}_{l,k})+\Tr(\mathbf{V})\\
\notag \mbox{s.t. } \hspace*{-1mm}&&\hspace*{-5mm}\mbox{C1: }\notag \frac{\Tr(\mathbf{H}_k\mathbf{W}_{l,k})}{ \Tr\Big(\mathbf{H}_k \big(\overset{K}{\underset{n\ne k}{\sum}}\overset{L_n}{\underset{r=1}{\sum}}\mathbf{W}_{r,n}+\overset{L_k}{\underset{m=l+1}{\sum}} \mathbf{W}_{m,k}\big)\Big)+\Tr(\mathbf{H}_k\mathbf{V})+\sigma_{\mathrm{s}_k}^2} \ge \Gamma_{\mathrm{req}_{l,k}}, \forall l,\forall k, \\
\hspace*{-1mm}&&\hspace*{-5mm}\mbox{C2: } \notag\frac{\Tr(\mathbf{H}_t\mathbf{W}_{1,k})}{ \Tr\Big(\mathbf{H}_t\big(\overset{K}{\underset{n\ne t}{\underset{n\ne k}{\sum}}}\overset{L_n}{\underset{r=1}{\sum}}\mathbf{W}_{r,n}+\overset{L_k}{\underset{m=2}{\sum}}\mathbf{W}_{m,k}\big)\Big) +\Tr(\mathbf{H}_t\mathbf{V})+\sigma_{\mathrm{s}_k}^2} \le \Gamma_{\mathrm{tol}}, \forall t\ne k, t\in\{1,\ldots,K\} , \\
\hspace*{-1mm}&&\hspace*{-5mm}\mbox{C3:}\,\,\max_{\norm{\Delta\mathbf{G}_j}_F\in {\Psi}_j}\,\, \Tr\Big(\mathbf{G}_j^H\big(\mathbf{V}+\sum_{k=1}^K \sum_{l=1}^{L_k} \mathbf{W}_{l,k}\big)\mathbf{G}_j\Big)\le P_{\mathrm{I}_j},\forall j\in\{1,\ldots,J\},\notag\\
\hspace*{-1mm}&&\hspace*{-5mm}\notag\mbox{C4:}\,\, \max_{\norm{\Delta\mathbf{G}_j}_F\in {\Psi}_j}\,\, \log_2\det\Big(\mathbf{I}_{N_{\mathrm{P}_\mathrm{R}}}+\mathbf{\Sigma}_j^{-1}\mathbf{G}_j^H
\mathbf{W}_{1,k} \mathbf{G}_j\Big)\le  R_{\mathrm{Eav}_{j,k}}\notag,\forall j,\forall k,\\
\hspace*{-1mm}&&\hspace*{-5mm}\mbox{C5:}\,\, \mathbf{V}\succeq \mathbf{0},\notag\quad\quad\mbox{C6:}\,\, \mathbf{W}_{l,k}\succeq \mathbf{0},\forall k,l,\notag\quad\quad\mbox{C7:}\,\, \Rank(\mathbf{W}_{l,k})\le 1,\forall k,l,
\end{eqnarray}\hrulefill
\end{figure*}  C3 is the interference temperature constraint \cite{JR:MIMO_Robust_CR}. Specifically, the secondary transmitter is required to control the transmit power such that the maximum received interference power at primary receiver $j$ is less than a given interference temperature  $P_{\mathrm{I}_j}$,  despite the imperfection of the CSI. On the other hand,  although constraint C3 restricts   the total received power at the primary receivers, it does not necessarily guarantee communication security against eavesdropping by the primary receivers, especially when $P_{\mathrm{I}_j}$ is not zero.  Thus, we focus on the worst-case scenario for  robust secure communication  design by imposing constraint\footnote{In general, constraint C3 is not a subset of constraint C4 or vice versa and thus has to be treated explicitly.  } C4. Since any secondary receiver could  be chosen as an eavesdropping target of primary receiver $j$ and layered transmission is adopted,
the upper limit $R_{\mathrm{Eav}_{j,k}}$ is imposed in C4 to restrict the achievable rate of primary receiver $j$,  if it  attempts to decode the video base layer of secondary receiver $k,\forall k$.
 { In this paper, we do not  maximize the secrecy rate of video delivery as this does not necessarily lead to a power efficient resource allocation. Yet, the problem formulation in (\ref{eqn:cross-layer})
guarantees  a minimum secrecy rate for layer $1$, i.e., the base layer, of the video signal intended for secondary receiver $k$, i.e., $C_{\mathrm{sec}_{1,k}}\ge \Big[C_{1,k} -\underset{\overset{\forall t\ne k}{\forall j}}{\max}\,\{\log_2(1+\Gamma_{\mathrm{tol}}),R_{\mathrm{Eav}_{j,k}}\}\Big]^+$.
Besides, the video information of layer $2$ to layer $L_k$ is secured when layer $1$ cannot be decoded by the potential eavesdroppers. } Finally, C5
 and $\mathbf{V}\in \mathbb{H}^{N_\mathrm{T}}$  are imposed such that $\mathbf{V}$ satisfies the requirements for a covariance matrix.

{
  \begin{Remark}
 We would like to emphasize that the layered transmission approach has two major
advantages compared to single-layer transmission. First, the video quality increases with the
number of decoded layers.  In practice, the intended video receivers may belong to different classes with different numbers of video layers and different QoS requirements.
 For instance, the secondary video receivers may belong to one of two categories,  namely
\emph{premium video receivers} and \emph{regular video receivers}, based on the subscribed services. Specifically,  the secondary transmitter may be required  to  guarantee the signal quality  of all video layers for premium secondary receivers (i.e., the secrecy rate of the first layer and the data rate of the enhancement layers) while it may guarantee only the basic signal quality  (i.e., the secrecy rate of the first layer) of videos for regular receivers.  Thereby, the desired premium users may
be charged a higher subscription fee for higher video quality. Second, the  \emph{self-protecting} architecture of layered
 transmission enables a more power efficient resource allocation under physical layer security constraints. In
particular, instead of protecting the entire encoded video signal as in single-layer transmission,
in layered transmission, the transmitter has to protect only the most important part of the video,
i.e., the base layer, to provide communication security.
  \end{Remark}}

{
 \begin{Remark}
In this paper, we assume that problem (\ref{eqn:cross-layer}) is feasible for resource allocation algorithm design. In practice, the feasibility of the problem depends on the channel condition and the QoS requirements of the receivers.   If the problem is infeasible,  user scheduling can be performed at a higher layer to temporarily exclude some users from being served so as to improve the problem feasibility. However, scheduling design is beyond the scope of this paper. Interested readers may refer to \cite{JR:Kwan_AF}\nocite{JR:scheduling_policy}--\cite{JR:scheduling_policy2} for a detailed discussion of scheduling algorithms.
 \end{Remark}}

%%%%%%%%%%%%%%%%%%%%%%%%%%%%%%%%%%%%%%%%%%%%%%%%%%%%%%% %%%%%%%%%%%%%%%
\section{Solution of the Optimization Problem} \label{sect:solution}
The optimization problem in (\ref{eqn:cross-layer}) is a non-convex quadratically constrained quadratic program (QCQP). In particular, the non-convexity of the considered problem is due to
constraints C1, C2, and C4. Besides, constraints C3 and C4 involve  infinitely many inequality constraints  due to the continuity of the CSI uncertainty sets, ${\Psi }_j,j\in\{1,\ldots,J\}$.
In order to derive an efficient
 resource allocation algorithm for the considered problem, we first rewrite the original problem to avoid the non-convexity associated with constraints C1 and C2. Then, we convert the infinitely many constraints in C3 and C4 into an equivalent finite number of  constraints. Finally, we use semi-definite programming relaxation (SDR) to obtain the resource allocation solution for the reformulated problem.

%%%%%%%%%%%%%%%%%%%%%%%%%%%%%%%%%%%%%%%%%%%%%%%%%%%%%%%%%%%%%%%%%%%%%%%%%%%%%%%%
\subsection{Problem Transformation} \label{sect:solution_dual_decomposition}
%%%%%%%%%%%%%%%%%%%%%%%%%%%%%%%%%%%%%%%%%%%%%%%%%%%%%%%%%%%%%%%%%%%%%%%%%%%%%%%]
First, we rewrite problem (\ref{eqn:cross-layer})  in an equivalent form as in \eqref{eqn:SDP}, where  $\mathbf{H}_k=\mathbf{h}_k\mathbf{h}^H_k$ and  $\mathbf{W}_{l,k}=\mathbf{w}_{l,k}\mathbf{w}^H_{l,k}$. We note that $\mathbf{W}_{l,k}\succeq \mathbf{0}$, $\mathbf{W}_{l,k}\in \mathbb{H}^{N_\mathrm{T}}$, and $\Rank(\mathbf{W}_{l,k})\le 1,\forall l,k,$ in (\ref{eqn:SDP}) are imposed to guarantee that $\mathbf{W}_{l,k}=\mathbf{w}_{l,k}\mathbf{w}^H_{l,k}$ holds after optimization.
%%%%%%%%%%%%%%%%%%%%%%%%%%%%%%%%%%%%%%%%5555
{
Next, to handle the infinitely many constraints in C3, we introduce a Lemma which will allow us to convert them into a finite number of linear matrix inequalities
(LMIs).
\begin{Lem}[S-Procedure \cite{book:convex}] Let a function $f_m(\mathbf{x}),m\in\{1,2\},\mathbf{x}\in \mathbb{C}^{N\times 1},$ be defined as
\begin{eqnarray}
f_m(\mathbf{x})=\mathbf{x}^H\mathbf{A}_m\mathbf{x}+2 \mathrm{Re} \{\mathbf{b}_m^H\mathbf{x}\}+c_m,
\end{eqnarray}
where $\mathbf{A}_m\in\mathbb{H}^N$, $\mathbf{b}_m\in\mathbb{C}^{N\times 1}$, and $c_m\in\mathbb{R}$. Then, the implication $f_1(\mathbf{x})\le 0\Rightarrow f_2(\mathbf{x})\le 0$  holds if and only if there exists an $\omega\ge 0$ such that
\begin{eqnarray}\omega
\begin{bmatrix}
       \mathbf{A}_1 & \mathbf{b}_1          \\
       \mathbf{b}_1^H & c_1           \\
           \end{bmatrix} -\begin{bmatrix}
       \mathbf{A}_2 & \mathbf{b}_2          \\
       \mathbf{b}_2^H & c_2           \\
           \end{bmatrix}          \succeq \mathbf{0},
\end{eqnarray}
provided that there exists a point $\mathbf{\hat{x}}$ such that $f_k(\mathbf{\hat{x}})<0$.
\end{Lem}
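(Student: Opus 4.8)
The plan is to prove the equivalence in both directions, working throughout with the homogenized forms. For any $\mathbf{x}\in\mathbb{C}^{N\times 1}$, introduce the lifted vector and the Hermitian block matrices
\begin{equation}
\bar{\mathbf{x}}=\begin{bmatrix}\mathbf{x}\\ 1\end{bmatrix},\quad \mathbf{M}_m=\begin{bmatrix}\mathbf{A}_m & \mathbf{b}_m\\ \mathbf{b}_m^H & c_m\end{bmatrix},\quad m\in\{1,2\},
\end{equation}
so that $f_m(\mathbf{x})=\bar{\mathbf{x}}^H\mathbf{M}_m\bar{\mathbf{x}}$. The sufficiency (``if'') direction is then an immediate algebraic check: if some $\omega\ge 0$ satisfies $\omega\mathbf{M}_1-\mathbf{M}_2\succeq\mathbf{0}$, then $\omega f_1(\mathbf{x})-f_2(\mathbf{x})=\bar{\mathbf{x}}^H(\omega\mathbf{M}_1-\mathbf{M}_2)\bar{\mathbf{x}}\ge 0$ for every $\mathbf{x}$, whence $f_1(\mathbf{x})\le 0$ forces $f_2(\mathbf{x})\le\omega f_1(\mathbf{x})\le 0$. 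This direction needs no feasibility hypothesis.

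The substantive part is necessity. I would first restate the implication as the infeasibility of the system $f_1(\mathbf{x})\le 0,\ f_2(\mathbf{x})>0$, and then pass to the joint range of the Hermitian pair $(\mathbf{M}_1,\mathbf{M}_2)$,
\begin{equation}
\mathcal{S}=\Big\{\big(\bar{\mathbf{x}}^H\mathbf{M}_1\bar{\mathbf{x}},\,\bar{\mathbf{x}}^H\mathbf{M}_2\bar{\mathbf{x}}\big)\in\mathbb{R}^2 : \bar{\mathbf{x}}\in\mathbb{C}^{(N+1)\times 1}\Big\}.
\end{equation}
The decisive structural fact, and the reason the S-procedure is lossless in the complex setting, is that $\mathcal{S}$ is convex: it is the cone generated by the joint numerical range $\{(\bar{\mathbf{x}}^H\mathbf{M}_1\bar{\mathbf{x}},\bar{\mathbf{x}}^H\mathbf{M}_2\bar{\mathbf{x}}):\norm{\bar{\mathbf{x}}}=1\}$, which coincides with the numerical range of $\mathbf{M}_1+i\mathbf{M}_2$ and is convex by the Toeplitz--Hausdorff theorem. (This is exactly the place where the complex Hermitian structure is used; the real analogue would instead require Dines's theorem.)

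Given convexity, I would invoke the separating hyperplane theorem. The infeasibility above, combined with homogenization, places the origin on the boundary of $\mathcal{S}$ relative to the offending quadrant, yielding a nonzero $(\mu_1,\mu_2)$ with $\mu_1 u_1+\mu_2 u_2\le 0$ on $\mathcal{S}$ while $\mu_2>0$. Here the strict-feasibility point $\hat{\mathbf{x}}$ with $f_1(\hat{\mathbf{x}})<0$ plays the role of Slater's condition: it rules out the degenerate ``vertical'' separator $\mu_2=0$, which would otherwise leave the multiplier undefined, and it forces $\mu_1\ge 0$. Normalizing $\mu_2=1$ and setting $\omega=\mu_1\ge 0$, the separation reads $\omega f_1(\mathbf{x})-f_2(\mathbf{x})\ge 0$ for all $\mathbf{x}$, i.e. $\bar{\mathbf{x}}^H(\omega\mathbf{M}_1-\mathbf{M}_2)\bar{\mathbf{x}}\ge 0$ on all lifted vectors, which by homogeneity and closure extends to every $\bar{\mathbf{x}}\in\mathbb{C}^{(N+1)\times 1}$ and gives $\omega\mathbf{M}_1-\mathbf{M}_2\succeq\mathbf{0}$.

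I expect the main obstacle to be establishing the convexity of $\mathcal{S}$ together with the careful boundary bookkeeping of the homogenization step, since everything else is either direct algebra or a standard separation argument. An alternative route that relocates this difficulty is to write $\min_{\mathbf{x}}\{-f_2(\mathbf{x}):f_1(\mathbf{x})\le 0\}$ as a rank-one-constrained semidefinite program in $\mathbf{X}=\bar{\mathbf{x}}\bar{\mathbf{x}}^H$, drop the rank constraint, and appeal to semidefinite-programming strong duality (valid because $\hat{\mathbf{x}}$ is a Slater point), with the optimal dual variable furnishing $\omega$; the tightness of this relaxation for a single quadratic constraint is itself equivalent to the hidden convexity above, so the core difficulty is merely moved rather than removed. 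Since the statement is quoted verbatim from \cite{book:convex}, within the paper I would simply cite it, but the argument above is the self-contained proof I would supply.
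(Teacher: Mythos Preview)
Your proof is correct and follows the standard route to the complex S-procedure: homogenize via the lifted vector $\bar{\mathbf{x}}$ and the block matrices $\mathbf{M}_m$, verify sufficiency by a one-line quadratic-form check, and for necessity invoke convexity of the joint range (via the Toeplitz--Hausdorff theorem) together with a separating-hyperplane argument in which the strict feasibility point $\hat{\mathbf{x}}$ with $f_1(\hat{\mathbf{x}})<0$ plays the Slater role that excludes the degenerate separator $\mu_2=0$. The alternative SDP-duality route you mention is likewise valid and, as you note, merely relocates the same hidden-convexity fact.

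The paper, however, does not prove this lemma at all: it is stated as a quoted result from \cite{book:convex} and used as a black box to convert the semi-infinite constraint C3 into the single LMI $\overline{\mathrm{C3}}$. So the ``comparison'' is simply that the paper defers to a reference whereas you supply a self-contained argument. Your closing remark already anticipates this: within the paper the appropriate action is exactly the citation, and your sketch is what one would give if a standalone proof were required. One minor point worth flagging is that the paper's statement writes ``$f_k(\hat{\mathbf{x}})<0$'' where $k$ is not otherwise bound; you have correctly read this as $f_1(\hat{\mathbf{x}})<0$, which is the standard Slater-type regularity condition for the S-procedure.
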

Now, we apply Lemma 1 to constraint C3. In particular, we define $\mathbf{\widehat g}_j=\vect(\mathbf{\widehat G}_j)$, $\mathbf{\Delta g}_j=\vect(\Delta\mathbf{G}_j)$,  $\overline{\mathbf{W}}_{l,k}=\mathbf{I}_{N_{\mathrm{P}_\mathrm{R}}}\otimes\mathbf{W}_{l,k}$, and $\overline{\mathbf{V}}=\mathbf{I}_{N_{\mathrm{P}_\mathrm{R}}}\otimes\mathbf{V}$.  By exploiting  the fact that $\norm{\mathbf{\widehat G}_j}_F^2\le\varepsilon_j^2 \Leftrightarrow \Delta\mathbf{g}_j^H \Delta\mathbf{g}_j\hspace*{-1mm}\le\hspace*{-1mm} \varepsilon_j^2$, then we have
\begin{eqnarray}
\hspace*{-4.8mm}&&\hspace*{-0.8mm}\norm{\mathbf{\widehat G}_j}_F^2\le\varepsilon_j^2 \\
\Rightarrow\,\hspace*{-4.8mm}&&\hspace*{-0.8mm} \mbox{{C3}: }\hspace*{-1mm}0 \ge\hspace*{-1mm}  \max_{\Delta\mathbf{g}_j\in {\Psi}_j} \Delta\mathbf{g}_j^H\Big(\sum_{k=1}^K\sum_{l=1}^{L_k}\overline{\mathbf{W}}_{l,k}+\overline{\mathbf{V}}\Big)\Delta\mathbf{g}_j\notag\\
&&\hspace*{-2mm}+
2\mathrm{Re}\Big\{\mathbf{\widehat g}_j^H\Big(\sum_{k=1}^K\sum_{l=1}^{L_k}\overline{\mathbf{W}}_{l,k}+\overline{\mathbf{V}}\Big)\Delta\mathbf{g}_j\Big\}
\notag\\
&&\hspace*{-2mm}+\mathbf{\widehat g}_j^H\Big(\sum_{k=1}^K\sum_{l=1}^{L_k}\overline{\mathbf{W}}_{l,k}+\overline{\mathbf{V}}\Big)\mathbf{\widehat g}_j- P_{\mathrm{I}_j},\forall j\in\{1,\ldots,J\},\notag
\end{eqnarray}
if and only if there exists an $\omega_j\ge 0$ such that the following
LMIs constraint holds:
\begin{eqnarray}
\hspace*{-15mm}&&\label{eqn:LMI_C3}\hspace*{-5mm}\mbox{\textoverline{C3}: }\mathbf{S}_{\mathrm{\overline{C3}}_j}({\mathbf{W}}_{l,k},{\mathbf{V}}, \omega_j)\notag\\
&=&
         \begin{bmatrix}
       \omega_j\mathbf{I}_{N_{\mathrm{P}_\mathrm{R}}N_{\mathrm{T}}}-\overline{\mathbf{V}}& -\overline{\mathbf{V}}\mathbf{\widehat g}_j          \\
       -\mathbf{\widehat g}_j^H \overline{\mathbf{V}}
        & -\omega_j\varepsilon_j^2 +P_{\mathrm{I}_j} - \mathbf{\widehat g}_j^H \overline{\mathbf{V}} \mathbf{\widehat g}_j        \\
           \end{bmatrix}\notag\\
           &-& \mathbf{U}_{\mathbf{g}_j}^H\Big(\sum_{k=1}^K\sum_{l=1}^{L_k}\overline{\mathbf{W}}_{l,k}\Big)\mathbf{U}_{\mathbf{g}_j}\succeq \mathbf{0}, \forall j,
          \end{eqnarray}
where $\mathbf{U}_{\mathbf{g}_j}=\big[\mathbf{I}_{N_{\mathrm{P}_\mathrm{R}}N_{\mathrm{T}}},\,\, \mathbf{\widehat g}_j\big]$.  We note that the original constraint C3 is satisfied whenever $\mbox{\textoverline{C3}}$ is  satisfied. Besides, the new  constraint $\mbox{\textoverline{C3}}$ is not only an affine function  with respect to the optimization variables, but also involves only a finite number of constraints. In particular, $\mbox{\textoverline{C3}}$ can be easily handled by standard convex program solvers.}

%%%%%%%%%%%%%%%%%%%%%%%%%%%%%%%%%%%%%%%%%%%%%%%%%%%555

Next, we handle non-convex constraint C4 by introducing the following proposition for simplifying the considered optimization problem.
  \begin{Prop}\label{prop:det_constraint} For $R_{\mathrm{Eav}_{j,k}}>0$, the following implication holds for  constraint C4:
\begin{eqnarray}\label{eqn:C4:LMI}
\mbox{C4}\Longrightarrow \mbox{\textovertide{C4}: }\hspace{-8mm}&&  \max_{\norm{\Delta\mathbf{G}_j}_F\in {\Psi}_j}\,\,  \mathbf{G}_j^H
\mathbf{W}_{1,k} \mathbf{G}_j\\
&\preceq& \xi_{\mathrm{Eav}_{j,k}}\mathbf{\Sigma}_{j},\forall j\in\{1,\ldots,J\}, \forall k\in\{1,\ldots,K\},\notag
\end{eqnarray}
\end{Prop}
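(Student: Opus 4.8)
The plan is to reduce the robust (worst-case) scalar constraint C4 to a per-realization matrix inequality and then reinstate the universal quantifier over the uncertainty set. Since C4 bounds the \emph{maximum} of the log-det over $\Delta\mathbf{G}_j\in\Psi_j$ by $R_{\mathrm{Eav}_{j,k}}$, it is equivalent to demanding $\log_2\det\big(\mathbf{I}_{N_{\mathrm{P}_\mathrm{R}}}+\mathbf{\Sigma}_j^{-1}\mathbf{G}_j^H\mathbf{W}_{1,k}\mathbf{G}_j\big)\le R_{\mathrm{Eav}_{j,k}}$ for \emph{every} admissible error matrix. First I would fix an arbitrary $\Delta\mathbf{G}_j\in\Psi_j$, work with the resulting error-dependent inequality, and carry the ``for all'' quantifier through to the conclusion.

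The next step is to symmetrize the determinant so that its eigenvalue structure becomes transparent. Because $\mathbf{\Sigma}_j=\mathbf{G}_j^H\mathbf{V}\mathbf{G}_j+\sigma_{\mathrm{PU}_j}^2\mathbf{I}_{N_{\mathrm{P}_\mathrm{R}}}\succ\mathbf{0}$, the square root $\mathbf{\Sigma}_j^{1/2}$ exists and is invertible, and Sylvester's identity $\det(\mathbf{I}+\mathbf{A}\mathbf{B})=\det(\mathbf{I}+\mathbf{B}\mathbf{A})$ lets me write $\det\big(\mathbf{I}_{N_{\mathrm{P}_\mathrm{R}}}+\mathbf{\Sigma}_j^{-1}\mathbf{G}_j^H\mathbf{W}_{1,k}\mathbf{G}_j\big)=\det\big(\mathbf{I}_{N_{\mathrm{P}_\mathrm{R}}}+\mathbf{M}_{j,k}\big)$ with $\mathbf{M}_{j,k}=\mathbf{\Sigma}_j^{-1/2}\mathbf{G}_j^H\mathbf{W}_{1,k}\mathbf{G}_j\mathbf{\Sigma}_j^{-1/2}\succeq\mathbf{0}$, where positive semidefiniteness uses $\mathbf{W}_{1,k}\succeq\mathbf{0}$ (constraint C6). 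The payoff is that $\det(\mathbf{I}+\mathbf{M}_{j,k})=\prod_i\big(1+\lambda_i(\mathbf{M}_{j,k})\big)$ with all $\lambda_i\ge 0$.

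The crux is then the elementary bound $\prod_i\big(1+\lambda_i\big)\ge 1+\lambda_{\max}(\mathbf{M}_{j,k})$, valid because each factor is at least one. Feeding C4 in the form $\det(\mathbf{I}+\mathbf{M}_{j,k})\le 2^{R_{\mathrm{Eav}_{j,k}}}$ into this bound gives $\lambda_{\max}(\mathbf{M}_{j,k})\le 2^{R_{\mathrm{Eav}_{j,k}}}-1\triangleq\xi_{\mathrm{Eav}_{j,k}}$, with $\xi_{\mathrm{Eav}_{j,k}}>0$ guaranteed by the hypothesis $R_{\mathrm{Eav}_{j,k}}>0$. The scalar eigenvalue bound $\lambda_{\max}(\mathbf{M}_{j,k})\le\xi_{\mathrm{Eav}_{j,k}}$ is the same as $\mathbf{M}_{j,k}\preceq\xi_{\mathrm{Eav}_{j,k}}\mathbf{I}_{N_{\mathrm{P}_\mathrm{R}}}$, and a congruence transformation by $\mathbf{\Sigma}_j^{1/2}$ (which preserves the semidefinite ordering) yields $\mathbf{G}_j^H\mathbf{W}_{1,k}\mathbf{G}_j\preceq\xi_{\mathrm{Eav}_{j,k}}\mathbf{\Sigma}_j$. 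As $\Delta\mathbf{G}_j$ was arbitrary, this is precisely \textovertide{C4}.

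I expect the only real subtlety to lie in the \emph{direction} of the determinant--eigenvalue bound rather than in any computation: the step $\det(\mathbf{I}+\mathbf{M}_{j,k})\ge 1+\lambda_{\max}(\mathbf{M}_{j,k})$ is exactly what downgrades the statement from an equivalence to an implication, since it is slack whenever $\mathbf{M}_{j,k}$ has more than one nonzero eigenvalue. Equality, and hence the converse \textovertide{C4} $\Rightarrow$ C4, holds if and only if $\mathbf{M}_{j,k}$ (equivalently $\mathbf{W}_{1,k}$) is rank one---the regime recovered once the rank constraint C7 is shown to hold at the optimal SDR solution. I would highlight this so that replacing the non-convex C4 by the convex \textovertide{C4} is seen as a legitimate relaxation that tightens to an equivalence at a rank-one optimum.
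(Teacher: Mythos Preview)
Your proof is correct and follows essentially the same route as the paper's Appendix~A: symmetrize via $\mathbf{\Sigma}_j^{-1/2}$ and Sylvester's identity, bound the determinant below by an eigenvalue expression, convert to a matrix inequality, and undo the congruence. The only cosmetic difference is that the paper detours through $\det(\mathbf{I}+\mathbf{M}_{j,k})\ge 1+\Tr(\mathbf{M}_{j,k})$ (its Lemma~3) and then $\Tr(\mathbf{M}_{j,k})\ge\lambda_{\max}(\mathbf{M}_{j,k})$, whereas you merge these into the single step $\det(\mathbf{I}+\mathbf{M}_{j,k})\ge 1+\lambda_{\max}(\mathbf{M}_{j,k})$; both reach the same rank-one equality condition.
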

where $\xi_{\mathrm{Eav}_{j,k}}=2^{R_{\mathrm{Eav}_{j,k}}}-1$ is an auxiliary constant with  $\xi_{\mathrm{Eav}_{j,k}}> 0$
for $R_{\mathrm{Eav}_{j,k}}> 0$. We note that constraint $\mbox{\textovertide{C4}}$ is equivalent to constraint $\mbox{{C4}}$ if $\Rank(\mathbf{W}_{1,k})\le 1,\forall k$.

 \emph{\,Proof:} Please refer to Appendix A.

Although constraint $\mbox{\textovertide{C4}}$ is less complex compared to  $\mbox{{C4}}$, there are still infinitely many LMI constraints to satisfy  $\mbox{\textovertide{C4}}$ for $\norm{\Delta\mathbf{G}_j}_F\in {\Psi}_j$. Hence, we adopt the following Lemma to further simplify $\mbox{\textovertide{C4}}$:

\begin{Lem}[Robust Quadratic Matrix Inequalities \cite{JR:extended_S_procedure}] Let a quadratic matrix function $f(\mathbf{X})$ be defined as
\label{Lemma:S_matrix}
\begin{eqnarray}
f(\mathbf{X})=\mathbf{X}^H\mathbf{A}\mathbf{X}+\mathbf{X}^H\mathbf{B}+\mathbf{B}^H\mathbf{X} +\mathbf{C},
\end{eqnarray}
 where $\mathbf{X},\mathbf{A}, \mathbf{B}$, and $\mathbf{C}$ are arbitrary matrices with appropriate dimensions.  Then, the following two statements are equivalent: \begin{eqnarray}\label{eqm:extended_S_lemma1}
&&f(\mathbf{X})\succeq \mathbf{0},\forall \mathbf{X}\in\Big\{\mathbf{X}\mid \Tr(\mathbf{D}\mathbf{X}\mathbf{X}^H)\le 1\Big\}\notag\\
&&\Longleftrightarrow\label{eqm:extended_S_lemma2}\begin{bmatrix}
       \mathbf{C} & \mathbf{B}^H          \\
       \mathbf{B} & \mathbf{A}           \\
           \end{bmatrix} -\delta\begin{bmatrix}
       \mathbf{I} & \mathbf{0}          \\
       \mathbf{0} & -\mathbf{D}           \\
           \end{bmatrix}          \succeq \mathbf{0},  \,\,\mbox{if } \exists\delta\ge 0,
\end{eqnarray}
for matrix $\mathbf{D}\succeq \zero$ and $\delta$ is an auxiliary constant.
\end{Lem}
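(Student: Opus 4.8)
The plan is to read the claim as a matrix-valued version of the S-lemma and to prove the two directions of the equivalence separately. The direction ``LMI $\Rightarrow$ robust positive semidefiniteness'' is a short congruence verification, whereas the converse is the substantive part. The device common to both directions is that $f(\mathbf{X})\succeq\mathbf{0}$ holds if and only if $\mathbf{z}^Hf(\mathbf{X})\mathbf{z}\ge 0$ for every conformable vector $\mathbf{z}$, and each such scalar quadratic can be realised as a congruence of the block matrices in the statement. Concretely, introducing the lifted vector $\mathbf{u}=[\mathbf{z}^H,(\mathbf{X}\mathbf{z})^H]^H$, a direct expansion gives
\[
\mathbf{z}^Hf(\mathbf{X})\mathbf{z}=\mathbf{u}^H\mathbf{M}\mathbf{u},\qquad \norm{\mathbf{z}}^2-\mathbf{z}^H\mathbf{X}^H\mathbf{D}\mathbf{X}\mathbf{z}=\mathbf{u}^H\mathbf{N}\mathbf{u},
\]
where $\mathbf{M}=\begin{bmatrix}\mathbf{C}&\mathbf{B}^H\\\mathbf{B}&\mathbf{A}\end{bmatrix}$ and $\mathbf{N}=\begin{bmatrix}\mathbf{I}&\mathbf{0}\\\mathbf{0}&-\mathbf{D}\end{bmatrix}$ are the two block matrices of the lemma (with $\mathbf{N}$ weighted by $\delta$). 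Thus $\mathbf{M}$ is the quadratic form attached to $f$ and $\mathbf{N}$ is the one attached to the uncertainty constraint $\Tr(\mathbf{D}\mathbf{X}\mathbf{X}^H)\le 1$.

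For the sufficiency direction, suppose $\mathbf{M}-\delta\mathbf{N}\succeq\mathbf{0}$ for some $\delta\ge 0$. Fixing any $\mathbf{X}$ with $\Tr(\mathbf{D}\mathbf{X}\mathbf{X}^H)\le 1$ and any $\mathbf{z}$, and pre- and post-multiplying this LMI by $\mathbf{u}^H$ and $\mathbf{u}$, the identities above yield $\mathbf{z}^Hf(\mathbf{X})\mathbf{z}\ge\delta\bigl(\norm{\mathbf{z}}^2-\mathbf{z}^H\mathbf{X}^H\mathbf{D}\mathbf{X}\mathbf{z}\bigr)$. It then remains only to verify that the bracketed term is nonnegative: writing $\mathbf{z}^H\mathbf{X}^H\mathbf{D}\mathbf{X}\mathbf{z}=\Tr(\mathbf{X}^H\mathbf{D}\mathbf{X}\,\mathbf{z}\mathbf{z}^H)$ and using $\mathbf{X}^H\mathbf{D}\mathbf{X}\succeq\mathbf{0}$ (since $\mathbf{D}\succeq\mathbf{0}$) together with $\mathbf{z}\mathbf{z}^H\preceq\norm{\mathbf{z}}^2\mathbf{I}$, one bounds it by $\norm{\mathbf{z}}^2\Tr(\mathbf{D}\mathbf{X}\mathbf{X}^H)\le\norm{\mathbf{z}}^2$. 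Hence the bracket is $\ge 0$, and as $\delta\ge 0$ we get $\mathbf{z}^Hf(\mathbf{X})\mathbf{z}\ge 0$ for all $\mathbf{z}$, i.e. $f(\mathbf{X})\succeq\mathbf{0}$ throughout the uncertainty set.

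The converse is the main obstacle and is in essence a \emph{lossless} S-procedure statement for matrix-valued quadratics. Robust feasibility means that $\mathbf{u}^H\mathbf{M}\mathbf{u}\ge 0$ whenever $\mathbf{u}^H\mathbf{N}\mathbf{u}\ge 0$ and $\mathbf{u}$ has the structured form $[\mathbf{z}^H,(\mathbf{X}\mathbf{z})^H]^H$. The delicate point is that a naive dualisation of the matrix constraint would introduce a \emph{matrix} multiplier, whereas the lemma asserts that a single \emph{scalar} multiplier $\delta\ge 0$ already suffices; this reduction is precisely what separates the robust matrix inequality from an ordinary Schur-complement rearrangement. My plan is to cast the robust condition as a semidefinite feasibility problem in the lifted variable, invoke a Slater-type constraint qualification (existence of a strictly feasible $\mathbf{X}$, the analogue of the point $\mathbf{\hat{x}}$ in Lemma~1) so that strong duality holds with zero gap, and then apply the rank-one decomposition machinery underlying the extended S-procedure to show that an optimal dual certificate can be taken to be $\delta\mathbf{N}$, which is exactly the stated LMI. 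The step I expect to be hardest to make fully rigorous from first principles is this scalar-multiplier sufficiency, i.e. the losslessness of the S-procedure in the matrix setting; rather than re-derive the underlying strong-duality result, I would invoke \cite{JR:extended_S_procedure} for that reduction.
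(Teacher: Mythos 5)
Your sufficiency direction is correct and complete: the lifting $\mathbf{u}=[\mathbf{z}^H,(\mathbf{X}\mathbf{z})^H]^H$ gives exactly $\mathbf{z}^Hf(\mathbf{X})\mathbf{z}=\mathbf{u}^H\mathbf{M}\mathbf{u}$ and $\norm{\mathbf{z}}^2-\mathbf{z}^H\mathbf{X}^H\mathbf{D}\mathbf{X}\mathbf{z}=\mathbf{u}^H\mathbf{N}\mathbf{u}$, and your trace bound $\Tr(\mathbf{X}^H\mathbf{D}\mathbf{X}\,\mathbf{z}\mathbf{z}^H)\le\norm{\mathbf{z}}^2\Tr(\mathbf{D}\mathbf{X}\mathbf{X}^H)\le\norm{\mathbf{z}}^2$ is valid, so the LMI does imply $f(\mathbf{X})\succeq\mathbf{0}$ over the whole uncertainty set. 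For context: the paper itself offers no proof of this lemma at all --- it is stated as a known result and attributed entirely to \cite{JR:extended_S_procedure} (Luo, Sturm, and Zhang), so there is no in-paper argument to compare against. Your treatment of the converse is therefore consistent with what the paper does: you correctly identify that the substantive content is the losslessness of the S-procedure for matrix-valued quadratics with a single scalar multiplier, you correctly note that Slater's condition is unproblematic here (indeed $\mathbf{X}=\mathbf{0}$ is strictly feasible since $\Tr(\mathbf{0})<1$), and you then defer the rank-one-decomposition/strong-duality machinery to the same reference the paper cites. The only caveat worth recording is that your proposal is consequently not a self-contained proof of the stated equivalence --- the necessity direction remains an outline resting on the cited theorem --- but since the paper treats the entire lemma as imported, you have in fact verified strictly more than the paper does.
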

We note that  Lemma \ref{Lemma:S_matrix} has been adopted in the literature before for resource allocation algorithm design with imperfect CSI \cite{JR:MIMO_Robust_CR}. By applying Lemma \ref{Lemma:S_matrix} to (\ref{eqn:C4:LMI}) and following similar steps as in \cite{JR:Ken_AN_PHY}, i.e., setting $\mathbf{X}=\Delta\mathbf{G}_j$, $\mathbf{A}= \xi_{\mathrm{Eav}_{j,k}}\mathbf{V}-\mathbf{W}_{1,k}$, $\mathbf{B}= (\xi_{\mathrm{Eav}_{j,k}}\mathbf{V}-\mathbf{W}_{1,k})\mathbf{\widehat G}_j, \mathbf{C}=\xi_{\mathrm{Eav}_{j,k}}\mathbf{I}_{N_\mathrm{R}}\sigma_{\mathrm{PU}_j}^2+{\mathbf{\widehat G}}^H_j(\xi_{\mathrm{Eav}_{j,k}}\mathbf{V}- \mathbf{W}_{1,k}){\mathbf{\widehat G}}^H_j$, and $\mathbf{D}=\frac{\mathbf{I}_{N_{\mathrm{T}}}}{\varepsilon_j^2}$ in (\ref{eqm:extended_S_lemma1}), we obtain

\begin{eqnarray}\label{eqn:LMI_C4}
&&\mbox{C4}\Longrightarrow \mbox{\textovertide{C4}}\Longleftrightarrow\mbox{\textoverline{C4}: } \mathbf{S}_{\mathrm{\overline{C4}}_{k,j}}({\mathbf{W}}_{1,k},\mathbf{V}, \delta_{k,j})\notag\\
=&&\hspace*{-5mm}\notag
\begin{bmatrix}
        \mathbf{F}_{j,k}& \xi_{\mathrm{Eav}_{j,k}}{\mathbf{\widehat G}}^H_j\mathbf{V}     \\
       \xi_{\mathrm{Eav}_{j,k}}\mathbf{V} {\mathbf{\widehat G}}_j \ &  \xi_{\mathrm{Eav}_{j,k}}\mathbf{V}+\frac{\delta_{k,j}}{\varepsilon_j^2} \mathbf{I}_{N_{\mathrm{T}}}
           \end{bmatrix}\notag\\
           -&&\mathbf{R}_j^H\mathbf{W}_{1,k}\mathbf{R}_j\succeq \mathbf{0},\forall k, j,
\end{eqnarray}
where $\mathbf{R}_j=\big[{\mathbf{\widehat G}}_j, \,\mathbf{I}_{N_\mathrm{T}}\big]$, $\mathbf{F}_{i,k}=(\xi_{\mathrm{Eav}_{j,k}}\sigma_{\mathrm{PU}_j}^2-\delta_{k,j})\mathbf{I}_{N_\mathrm{R}}+      \xi_{\mathrm{Eav}_{j,k}}{\mathbf{\widehat G}}^H_j \mathbf{V}{\mathbf{\widehat G}}_j $, and $\delta_{k,j}$ is an auxiliary optimization variable. Besides, C4 is equivalent to  $\mbox{\textoverline{C4}} $ when $\Rank(\mathbf{W}_{1,k})\le 1$.

Now,  we replace constraints $\mbox{C3}$ and $\mbox{C4}$ with constraints $\mbox{\textoverline{C3}}$ and $\mbox{\textoverline{C4}}$, respectively.   Hence, the new optimization problem can be written as
\begin{eqnarray}
\label{eqn:SDP_transformed}&&\hspace*{-10mm} \underset{\mathbf{W}_{l,k},\mathbf{V}\in \mathbb{H}^{N_\mathrm{T}},\omega_j,\delta_{k,j}
}{\mino} \,\, \sum_{k=1}^K\sum_{l=1}^{L_k}\Tr(\mathbf{W}_{l,k})+\Tr(\mathbf{V})\nonumber\\
\notag \mbox{s.t. } \hspace*{-1mm}&&\hspace*{25mm}\mbox{C1, C2, C5, C6,}\notag \\
\hspace*{6mm}&&\hspace*{-5mm}\notag  \mbox{\textoverline{C3}: }\mathbf{S}_{\mathrm{\overline{C3}}_j}({\mathbf{W}}_{l,k},{\mathbf{V}},\omega_j)\succeq \mathbf{0},\forall j\notag\\
\hspace*{6mm}&&\hspace*{-5mm}\notag \mbox{\textoverline{C4}: } \mathbf{S}_{\mathrm{\overline{C4}}_{k,j}}({\mathbf{W}}_{1,k},\mathbf{V}, \delta_{k,j})\succeq \mathbf{0}\notag,\,\,\,\forall j,\forall k,\\
\hspace*{6mm}&&\hspace*{-5mm}\mbox{C7:}\,\Rank(\mathbf{W}_{l,k})\le 1,\forall k,l, \,\,\mbox{C8: }\omega_j\ge 0,\forall j,\notag\\
\hspace*{6mm}&&\hspace*{-5mm}\mbox{C9:  }\delta_{k,j}\ge 0,\forall k,j,
\end{eqnarray}
where $\omega_j$ and $\delta_{k,j}$ in C8 and C9 are connected to the LMI
constraints in (\ref{eqn:LMI_C3}) and (\ref{eqn:LMI_C4}), respectively. Since optimization problems  (\ref{eqn:SDP_transformed}) and (\ref{eqn:SDP}) share the same optimal solution, we focus on the design of the optimal resource allocation policy for the problem in (\ref{eqn:SDP_transformed})  in the sequel.

{
We note that constraints  $\mbox{\textoverline{C3}}$ and $\mbox{\textoverline{C4}}$ are jointly convex with respect to the optimization variables. The only remaining obstacle in solving (\ref{eqn:SDP_transformed}) is the combinatorial rank constraint in $\mbox{C7}$. Hence, we adopt the SDP relaxation approach by relaxing constraint $\mbox{C7: }\Rank(\mathbf{W}_{l,k})\le 1$, i.e., we remove C7 from the problem formulation. Then, the considered problem becomes a convex SDP which can be solved efficiently by numerical solvers such as CVX \cite{website:CVX}.  However, removing constraint $\mbox{C7 }$  results in a larger feasible solution set. Hence, in general, the optimal objective value of the relaxed problem  of (\ref{eqn:SDP_transformed}) may be smaller than the optimal objective value of  (\ref{eqn:SDP}). If the solution $\mathbf{W}_{l,k}$ of the relaxed problem is a rank-one matrix, this  is also the optimal solution of the original problem in (\ref{eqn:SDP}) and the adopted SDP relaxation is tight. Subsequently, the optimal $\mathbf{w}_{l,k}$ can be obtained by performing eigenvalue decomposition of $\mathbf{W}_{l,k}$ and selecting the principal eigenvector as the beamforming vector. Unfortunately, in general the constraint relaxation may not be tight and $\Rank(\mathbf{W}_{l,k})>1$ may occur.  In the following, we propose a method for constructing an optimal solution of the relaxed version of (\ref{eqn:SDP_transformed}) with a rank-one matrix $\mathbf{W}_{l,k},\forall k,l$.}
\newcounter{mytempeqncnt}
 \begin{figure*}[!t]\setcounter{mytempeqncnt}{\value{equation}}
\setcounter{equation}{27}
\begin{eqnarray}
\label{eqn:sub-optimal_1}&&\hspace*{-10mm} \underset{\mathbf{V}\in \mathbb{H}^{N_\mathrm{T}},P_{l,k},\omega_j,\delta_{k,j}
}{\mino} \,\, \sum_{k=1}^K\sum_{l=1}^{L_k}\Tr(P_{l,k}\mathbf{ W}^\mathrm{sub}_{l,k})+\Tr(\mathbf{V})\nonumber\\
\notag \mbox{s.t. } \hspace*{-1mm}&&\hspace*{-5mm}\mbox{C1: }\notag \frac{\Tr(\mathbf{H}_k P_{l,k}\mathbf{ W}^\mathrm{sub}_{l,k})}{ \Tr\Big(\mathbf{H}_k \big(\overset{K}{\underset{n\ne k}{\sum}}\overset{L_n}{\underset{l=1}{\sum}}P_{l,n}\mathbf{ W}^\mathrm{sub}_{l,n}+\overset{L_k}{\underset{m=l+1}{\sum}}P_{m,k}\mathbf{ W}^\mathrm{sub}_{m,k} \big)\Big)+\Tr(\mathbf{H}_k\mathbf{V})+\sigma_{\mathrm{s}_k}^2} \ge \Gamma_{\mathrm{req}_{l,k}}, \forall l,\forall k, \\
\hspace*{-1mm}&&\hspace*{-5mm}\mbox{C2: } \notag\frac{\Tr(\mathbf{H}_t P_{1,k}\mathbf{ W}^\mathrm{sub}_{1,k})}{ \Tr\Big(\hspace*{-0.5mm}\mathbf{H}_t\big(\mathbf{V}+\overset{K}{\underset{n\ne t}{\underset{n\ne k}{\sum}}}\overset{L_n}{\underset{l=1}{\sum}}P_{l,n}\mathbf{ W}^\mathrm{sub}_{l,n}\hspace*{-0.5mm}+\hspace*{-0.5mm}\overset{L_k}{\underset{m=2}{\sum}}P_{m,k}\mathbf{ W}^\mathrm{sub}_{m,k}\big)\hspace*{-0.5mm}\Big)+\sigma_{\mathrm{s}_k}^2} \le \Gamma_{\mathrm{tol}}, \forall t\ne k, t\in\{1,\ldots,K\} , \\
\hspace*{6mm}&&\hspace*{-5mm}\notag  \mbox{\textoverline{C3}: }\mathbf{S}_{\mathrm{\overline{C3}}_j}(P_{l,k}\mathbf{ W}^\mathrm{sub}_{l,k},{\mathbf{V}},\omega_j)\succeq \mathbf{0},\forall j,\quad\mbox{\textoverline{C4}: } \mathbf{S}_{\mathrm{\overline{C4}}_{k,j}}(P_{1,k}\mathbf{ W}^\mathrm{sub}_{1,k},\mathbf{V}, \delta_{k,j})\succeq \mathbf{0}\notag,\,\,\,\forall j,\forall k,\\
\hspace*{6mm}&&\hspace*{-5mm}\mbox{C5:}\,\, \mathbf{V}\succeq \mathbf{0},\,\,\,\mbox{C6: }P_{l,k}\ge 0,\,\,\,\mbox{C8: }\omega_j\ge 0,\forall j,\,\,\,\mbox{C9:  }\delta_{k,j}\ge 0,\forall k,j.
\end{eqnarray}\hrulefill\setcounter{equation}{28}
\end{figure*}

\subsection{Optimality Condition for SDP Relaxation}
 In this subsection, we first reveal the tightness of the proposed SDP relaxation. The existence of a rank-one solution  matrix $\mathbf{W}_{l,k}$ for the  relaxed SDP version of (\ref{eqn:SDP_transformed}) is summarized in
the following theorem which is based on  \cite[Proposition 4.1]{JR:rui_zhang}.

\begin{Thm}\label{prop1}Suppose the optimal solution of the SDP relaxed version of (\ref{eqn:SDP_transformed}) is denoted by $\{\mathbf{W}_{l,k}^*,\mathbf{V}^*,\omega_j^*,$ $\delta_{k,j}^*\}$ and  $\exists k,l: \Rank(\mathbf{W}^*_{l,k})>1$. Then, there exists a feasible optimal solution of the SDP relaxed version of (\ref{eqn:SDP_transformed}), denoted by  $\mathbf{\widetilde \Lambda}\triangleq\{\mathbf{\widetilde W}_{l,k},\mathbf{\widetilde V},\widetilde\omega_j,\widetilde\delta_{k,j}\}$, with a rank-one matrix $\mathbf{\widetilde W}_{l,k}$, i.e.,  $\Rank(\mathbf{\widetilde W}_{l,k})=1$. This optimal solution can be obtained by construction.
\end{Thm}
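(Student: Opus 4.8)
My plan is to prove Theorem~\ref{prop1} by an explicit, purely \emph{primal} construction, so that no rank argument on the dual matrices is actually required. Given the relaxed optimum $\{\mathbf{W}_{l,k}^*,\mathbf{V}^*,\omega_j^*,\delta_{k,j}^*\}$ with some $\Rank(\mathbf{W}^*_{l,k})>1$, I would define
\[
\mathbf{\widetilde W}_{l,k}=\frac{\mathbf{W}_{l,k}^*\mathbf{h}_k\mathbf{h}_k^H\mathbf{W}_{l,k}^*}{\mathbf{h}_k^H\mathbf{W}_{l,k}^*\mathbf{h}_k},\qquad
\mathbf{\widetilde V}=\mathbf{V}^*+\sum_{k=1}^{K}\sum_{l=1}^{L_k}\underbrace{\big(\mathbf{W}_{l,k}^*-\mathbf{\widetilde W}_{l,k}\big)}_{\mathbf{\Delta}_{l,k}} .
\]
Every $\mathbf{\widetilde W}_{l,k}$ is rank one by inspection, and the expression is well defined because C1 with $\Gamma_{\mathrm{req}_{l,k}}>0$ forces $\mathbf{h}_k^H\mathbf{W}_{l,k}^*\mathbf{h}_k>0$. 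I would keep the slack variables $\omega_j^*$ and re-select $\delta_{k,j}$ as explained below.

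Two elementary facts drive the argument. First, with $\mathbf{a}=(\mathbf{W}_{l,k}^{*})^{1/2}\mathbf{h}_k$ one has $\mathbf{\Delta}_{l,k}=(\mathbf{W}_{l,k}^{*})^{1/2}(\mathbf{I}-\mathbf{a}\mathbf{a}^H/\norm{\mathbf{a}}^2)(\mathbf{W}_{l,k}^{*})^{1/2}\succeq\mathbf{0}$, hence $\mathbf{\widetilde W}_{l,k}\preceq\mathbf{W}_{l,k}^*$. Second, $\mathbf{h}_k^H\mathbf{\widetilde W}_{l,k}\mathbf{h}_k=\mathbf{h}_k^H\mathbf{W}_{l,k}^*\mathbf{h}_k$, so $\Tr(\mathbf{H}_k\mathbf{\Delta}_{l,k})=0$: the residual of user $k$'s own beams is invisible to its own channel. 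Summing, the total transmit covariance is invariant, $\mathbf{\widetilde V}+\sum_{l,k}\mathbf{\widetilde W}_{l,k}=\mathbf{V}^*+\sum_{l,k}\mathbf{W}_{l,k}^*$, which already gives equality of the objective.

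Feasibility I would then check line by line. For C1, because $\Tr(\mathbf{H}_k\mathbf{\Delta}_{l',k})=0$ for all layers $l'$ of the same user, the residuals dumped into $\mathbf{\widetilde V}$ contribute to $\Tr(\mathbf{H}_k\mathbf{\widetilde V})$ only through \emph{other} users, and these recombine exactly with the projected beams $\mathbf{\widetilde W}_{r,n}$ ($n\ne k$) to reproduce the original multiuser interference; the multilayer-interference and desired-signal terms are individually unchanged, so every C1 SINR is preserved verbatim. For C2 the numerator $\Tr(\mathbf{H}_t\mathbf{\widetilde W}_{1,k})$ does not increase (since $\mathbf{\widetilde W}_{1,k}\preceq\mathbf{W}_{1,k}^*$) while the denominator grows by $\Tr(\mathbf{H}_t\mathbf{\Delta}_{1,k})\ge0$, so each eavesdropper SINR can only drop. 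Constraint C3 depends on $\mathbf{V}+\sum_{l,k}\mathbf{W}_{l,k}$ alone, so by covariance invariance $\mbox{\textoverline{C3}}$ holds with the same $\omega_j^*$. For the secrecy constraint, at every admissible $\Delta\mathbf{G}_j$ the relaxed optimum satisfies $\mbox{\textovertide{C4}}$, whence $\mathbf{G}_j^H\mathbf{\widetilde W}_{1,k}\mathbf{G}_j\preceq\mathbf{G}_j^H\mathbf{W}_{1,k}^*\mathbf{G}_j\preceq\xi_{\mathrm{Eav}_{j,k}}\mathbf{\Sigma}_j^*\preceq\xi_{\mathrm{Eav}_{j,k}}\mathbf{\widetilde\Sigma}_j$ (with $\mathbf{\widetilde\Sigma}_j$ the matrix $\mathbf{\Sigma}_j$ evaluated at $\mathbf{\widetilde V}$) using $\mathbf{\widetilde V}\succeq\mathbf{V}^*$; thus $\mbox{\textovertide{C4}}$ still holds and Lemma~\ref{Lemma:S_matrix} returns a feasible $\widetilde\delta_{k,j}\ge0$ for $\mbox{\textoverline{C4}}$. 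With C5, C6 and the rank constraint C7 satisfied by construction, $\mathbf{\widetilde\Lambda}$ is feasible, rank one, and attains the relaxed optimum, hence is globally optimal for \eqref{eqn:SDP}.

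The one genuinely delicate step is the C1 verification: redistributing the residuals into the artificial-noise covariance must not inflate the interference seen by the lower layers that SIC has already removed, and this is exactly what the identity $\Tr(\mathbf{H}_k\mathbf{\Delta}_{l',k})=0$ guarantees. I would also flag the more standard route via \cite[Proposition~4.1]{JR:rui_zhang} — setting up strong duality (Slater holds under the stated feasibility assumption), reading the stationarity identity for the multiplier $\mathbf{Y}_{l,k}\succeq\mathbf{0}$ of C6 and invoking complementary slackness $\mathbf{Y}_{l,k}\mathbf{W}_{l,k}^*=\mathbf{0}$. That route does not terminate cleanly here, because the ``negative'' part of $\mathbf{Y}_{l,k}$ collects several rank-one terms from the C2 and $\mbox{\textoverline{C4}}$ duals and need not have rank one, so the direct null-space bound on $\Rank(\mathbf{W}_{l,k}^*)$ fails; this is precisely why I prefer the explicit construction above, which sidesteps the rank of the dual entirely.
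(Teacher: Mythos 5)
Your construction is correct, and it takes a genuinely different route from the paper. The paper proves Theorem~\ref{prop1} by writing down the KKT conditions of the relaxed problem, using the dual matrix $\mathbf{Y}^*_{l,k}$ and complementary slackness (following \cite[Proposition 4.1]{JR:rui_zhang}) to show that $\mathbf{W}^*_{l,k}$ decomposes into components lying in the null space of $\mathbf{A}_{l,k}$ (which also satisfy $\mathbf{H}_k\boldsymbol{\phi}_{\tau_{l,k}}=\mathbf{0}$) plus a single rank-one term; it then moves the null-space part into $\mathbf{\widetilde V}$. Your proof reaches the same endpoint by a purely primal projection $\mathbf{\widetilde W}_{l,k}=\mathbf{W}^*_{l,k}\mathbf{h}_k\mathbf{h}_k^H\mathbf{W}^*_{l,k}/(\mathbf{h}_k^H\mathbf{W}^*_{l,k}\mathbf{h}_k)$, with the residual $\mathbf{\Delta}_{l,k}\succeq\mathbf{0}$ absorbed into the artificial noise. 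The two constructions exploit the same two invariances --- the removed component is invisible to the intended user's channel ($\mathbf{H}_k\mathbf{\Upsilon}_{l,k}=\mathbf{0}$ in the paper, $\Tr(\mathbf{H}_k\mathbf{\Delta}_{l,k})=0$, equivalently $\mathbf{\Delta}_{l,k}\mathbf{h}_k=\mathbf{0}$, in yours) and the total transmit covariance $\mathbf{V}+\sum_{l,k}\mathbf{W}_{l,k}$ is preserved, which settles the objective, $\mbox{\textoverline{C3}}$, and (together with $\mathbf{\widetilde W}_{1,k}\preceq\mathbf{W}^*_{1,k}$ and $\mathbf{\widetilde V}\succeq\mathbf{V}^*$) C2 and $\mbox{\textoverline{C4}}$ via Lemma~\ref{Lemma:S_matrix}. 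Your line-by-line verification of C1 and C2 is sound, and your handling of $\mbox{\textoverline{C4}}$ through the equivalence of Lemma~\ref{Lemma:S_matrix} to recover a valid $\widetilde\delta_{k,j}\ge 0$ is legitimate (the paper instead keeps $\delta^*_{k,j}$ and shows the LMI only gains positive semidefinite terms). What your approach buys is significant: it is fully explicit, requires no extra optimization to determine the coefficients $f_{l,k},\alpha_{\tau_{l,k}}$, and --- most importantly --- uses no dual information whatsoever. Since the paper motivates its two suboptimal schemes precisely by the possible unavailability of the dual solution, your primal construction would render that workaround unnecessary. One minor quibble: your closing characterization of the dual route is not quite what the paper does (the only term subtracted from $\mathbf{A}_{l,k}$ in $\mathbf{Y}^*_{l,k}$ is the rank-one matrix $[\gamma^*_{l,k}-\sum_{t<l}\gamma^*_{t,k}\Gamma_{\mathrm{req}_{t,k}}]\mathbf{H}_k$, and the paper does not attempt a direct rank bound but performs the same ``move the excess into $\mathbf{V}$'' step you do); this does not affect the validity of your own argument.
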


\begin{proof}
Please refer to Appendix B for the proof of Theorem \ref{prop1} and the method for constructing the optimal solution.
\end{proof}
In other words,  the optimal solution of the SDP relaxed version of (\ref{eqn:SDP_transformed}) is  a rank-one beamforming matrix  $\mathbf{\widetilde W}_{l,k},\forall l,k$, by construction. Thus, constraint C4 is equivalent to $\mbox{\textovertide{C4}}$. More importantly,  the global optimum of (\ref{eqn:SDP}) can be obtained despite the adopted SDP relaxation.
\subsection{Suboptimal Resource Allocation Schemes}
The construction of  the optimal solution $\mathbf{\widetilde \Lambda}$ with $\Rank(\mathbf{\widetilde W}_{l,k})=1$ requires the optimal solution of the  dual version of the relaxed problem of (\ref{eqn:SDP_transformed}), cf. variable $\mathbf{Y}^*_{l,k}$ in (\ref{eqn:KKT-gradient-equivalent}) in Appendix B.
{
However, the solution of the dual problem   may not be provided by some numerical solvers and thus the construction of a rank-one  matrix $\mathbf{\widetilde W}_{l,k}$ is not possible in such cases.  In the following, we propose two suboptimal resource allocation schemes based on the solution of the primal problem of the relaxed version of (\ref{eqn:SDP_transformed}) which do not require the solution of the dual problem.}

\subsubsection{Suboptimal Resource Allocation Scheme 1}
 A suboptimal resource allocation scheme is proposed which is based on the  solution of the relaxed version of (\ref{eqn:SDP_transformed}). We first solve (\ref{eqn:SDP_transformed}) by SDP relaxation.  The global optimal solution of  (\ref{eqn:SDP_transformed}) is found if the obtained solution $\mathbf{W}_{l,k}^*$ is a  rank-one matrix.  Otherwise, we  construct a suboptimal solution set $ \mathbf{ W}^\mathrm{sub}_{l,k} = \mathbf{w}^\mathrm{sub}_{l,k}(\mathbf{w}^\mathrm{sub}_{l,k})^H$, where $\mathbf{w}^\mathrm{sub}_{l,k}$ is the eigenvector corresponding to the principal eigenvalue of matrix $\mathbf{ W}_{l,k}^*$.  Then, we define a scalar optimization variable $P_{l,k}$ which controls the power of the suboptimal beamforming matrix of layer $l$ for secondary receiver $k$. Subsequently, a new optimization problem is formulated as \eqref{eqn:sub-optimal_1} on the top of this page. It can be shown that the above optimization problem is jointly convex with respect to the optimization variables and thus can be solved by using efficient numerical solvers. Besides, the solution of (\ref{eqn:sub-optimal_1}) also satisfies the constraints of (\ref{eqn:SDP}). In other words, the solution of (\ref{eqn:sub-optimal_1})  serves as a suboptimal solution for (\ref{eqn:SDP}).

\subsubsection{Suboptimal Resource  Allocation Scheme 2}
The second proposed suboptimal resource allocation scheme  adopts a similar approach to solve the problem  as   suboptimal resource  allocation scheme 1, except for the choice of the suboptimal beamforming matrix $\mathbf{ W}^\mathrm{sub}_{l,k}$ when  $\Rank( \mathbf{ W}^\mathrm{sub}_{l,k})>1$. For scheme 2, the choice of beamforming matrix  $\mathbf{ W}^\mathrm{sub}_{l,k}$  is based on
the rank-one Gaussian randomization scheme  \cite{JR:Gaussian_randomization}.  Specifically, we calculate the eigenvalue decomposition of $\mathbf{ W}_{l,k}=\mathbf{U}_{l,k}\mathbf{\Theta}_{l,k}\mathbf{U}^H_{l,k}$, where $\mathbf{U}_{l,k}$ and $\mathbf{\Theta}_{l,k}$  are an $N_\mathrm{T}\times N_\mathrm{T}$ unitary matrix and a diagonal matrix, respectively. Then, we adopt the suboptimal beamforming vector $\mathbf{w}^\mathrm{sub}_{l,k}=\mathbf{U}_{l,k}\mathbf{\Theta}_{l,k}^{1/2}\mathbf{q}_{l,k},  \mathbf{W}^\mathrm{sub}_{l,k}=P_{l,k}\mathbf{w}^\mathrm{sub}_{l,k}(\mathbf{w}^\mathrm{sub}_{l,k})^H$, where $\mathbf{q}_{l,k}\in {\mathbb C}^{N_{\mathrm{T}}\times 1}$ and $\mathbf{q}_{l,k}\sim {\cal CN}(\mathbf{0}, \mathbf{I}_{N_{\mathrm{T}}})$. Subsequently, we follow the same approach as in (\ref{eqn:sub-optimal_1}) for optimizing $\{\mathbf{V},P_{l,k},\omega_j,\delta_{k,j}\}$ and obtain a suboptimal rank-one solution $P_{l,k} \mathbf{ W}^\mathrm{sub}_{l,k}$.  Furthermore,  we can execute  scheme 2 repeatedly for different  realizations of the Gaussian distributed random vector $\mathbf{q}_{l,k}$ such that the performance of scheme 2 can be improved by selecting the best $ \mathbf{ w}^\mathrm{sub}_{l,k}=\mathbf{U}_{l,k}\mathbf{\Theta}_{l,k}^{1/2}\mathbf{q}_{l,k}$ over different trials at the expense of a higher computation complexity.

\subsection{Computational Complexity}
{
In this section, we study the computational complexity of the proposed optimal and the two suboptimal algorithms.
 An upper bound  for the computational complexity of the optimal algorithm is given by \cite{book:interior_point_complexity}:
   \begin{eqnarray}
\hspace*{-6mm}&&\Delta_{\mathrm{complexity}}^{\mathrm{Opt}}=2\times\Delta_{\mathrm{complexity}}^{\mathrm{SDP}},\\
\label{eqn:big_O}
\hspace*{-6mm}&&\Delta_{\mathrm{complexity}}^{\mathrm{SDP}}=\bigo\Bigg(\Big(\sqrt{N_{\mathrm{T}}(LK\hspace*{-0.5mm}+\hspace*{-0.5mm}1)}
\log(\frac{1}{\delta})\Big)
\Big((N_{\mathrm{T}}(LK\hspace*{-0.5mm}+\hspace*{-0.5mm}1))^3\notag\\
\hspace*{-6mm}&&(KL\hspace*{-0.5mm}+\hspace*{-0.5mm}K^2\hspace*{-0.5mm}+\hspace*{-0.5mm}J(K\hspace*{-0.5mm}+\hspace*{-0.5mm}1)) \hspace*{-0.5mm}+\hspace*{-0.5mm}(N_{\mathrm{T}}(LK\hspace*{-0.5mm}+\hspace*{-0.5mm}1))^2\notag\\
\hspace*{-6mm}&&(KL\hspace*{-0.5mm}+\hspace*{-0.5mm}K^2\hspace*{-0.5mm}+\hspace*{-0.5mm}J(K\hspace*{-0.5mm}+\hspace*{-0.5mm}1))^2 \hspace*{-0.5mm}+\hspace*{-0.5mm}(KL\hspace*{-0.5mm}+\hspace*{-0.5mm}K^2\hspace*{-0.5mm}+\hspace*{-0.5mm}J(K\hspace*{-0.5mm}+\hspace*{-0.5mm}1))^3\Big) \Bigg)
\end{eqnarray}
 for a given solution accuracy $\delta>0$, since at most two SDPs are solved. In \eqref{eqn:big_O},   $\bigo(\cdot)$ represents for the big-O notation. On the other hand,  the computational complexity upper bound of suboptimal scheme 1 is given by
   \begin{eqnarray}
2\times\Delta_{\mathrm{complexity}}^{\mathrm{SDP}}
\end{eqnarray}
 since two SDPs have to be solved in this case. Suboptimal algorithm 2 adopts a similar approach to solve the problem as suboptimal scheme 1. The only difference is the multiple attempts in generating a Gaussian distributed beamforming vector for improving the system performance. Hence, the computational complexity upper bound for  suboptimal scheme 2 is given by
\begin{eqnarray}
(N_{\mathrm{Tries}}+1)\times\Delta_{\mathrm{complexity}}^{\mathrm{SDP}},
\end{eqnarray}
where $N_{\mathrm{Tries}}$ is the number of tries  in generating a Gaussian distributed beamforming vector. We note that the proposed optimal and suboptimal algorithms have polynomial time computational complexity. Such algorithms are considered to be fast algorithms in the literature  \cite[Chapter 34]{book:polynoimal} and are desirable for real time implementation.}

\section{Results}\label{sect:simulation}
In this section, we study the system performance of the proposed resource allocation scheme via simulations.  There are $K$ secondary receivers and $J$ primary receivers,  which are uniformly distributed in the range between a reference distance of $30$ meters and the maximum cell radius of $500$ meters.  We assume that there is always one \emph{premium secondary receiver} and the secondary transmitter is required to guarantee the SINR  of all video layers for  this receiver. On the contrary, the transmitter   guarantees only  the SINR of the first layer for the remaining $K-1$ \emph{regular receivers}.  We assume that the video signal of each secondary receiver is encoded into two layers.  { As is commonly done in the literature \cite{CN:two_layer}\nocite{CN:two_layer_1}--\cite{JR:two_layer}, we limit our case study to a single enhancement layer, since each additional enhancement layer increases the delay. } For the sake of illustration, the minimum required  SINR of the first layer and the second layer are given by $\Gamma_{\mathrm{Base}}$ and $\Gamma_{\mathrm{Base}}+3$ dB, respectively.  Also, we solve the optimization problem in (\ref{eqn:cross-layer}) via SDP relaxation and obtain the  average system performance by averaging over  different channel realizations.  We assume that the primary transmitter is equipped with $N_{\mathrm{P}_{\mathrm{T}}}=8$ antennas which serve all primary receivers simultaneously. The primary transmitter is located $500$ meters away from the secondary transmitter and transmits with a power\footnote{We assume that the primary transmitter has a lower
maximum transmit power budget compared to the secondary transmitter. In fact, both the primary transmitters and  receivers are equipped with multiple antennas which facilities power efficient data communication in the primary network.}
 of $5$ dBm. Because of path loss and channel fading, different secondary receivers experience different interference powers from the primary transmitter. In the sequel, we define the normalized maximum  channel estimation error of primary receiver $j$  as  $\sigma_{\mathrm{PU}_j}^2=\frac{\varepsilon^2_j}{\norm{\mathbf{G}_j}^2_F}$  with $\sigma_{\mathrm{PU}_a}^2=\sigma_{\mathrm{PU}_b}^2,\forall a, b\in\{1,\ldots,J\}$.  Unless specified otherwise, we assume  a normalized maximum  channel estimation error of  $\sigma_{\mathrm{PU}_j}^2=0.05,\forall j$ for primary receiver $j$  and there are $N_{\mathrm{P}_\mathrm{R}}=2$ receive antennas at each primary receiver. Besides, the maximum tolerable interference power at the primary receivers is set to $P_{\mathrm{I}_j}=-110.35$ dBm, $\forall j\in\{1,\ldots,J\}$. The  parameters adopted for our simulation are summarized in Table \ref{tab:feedback}.

\begin{table}[t]\caption{System parameters}\label{tab:feedback} \centering

\begin{tabular}{|L|p{3.5cm}|}\hline

Carrier center frequency & $2.6$ GHz  \\
  \hline
Small-scale fading distribution & Rayleigh fading\\
\hline
Large-scale fading model  & Non-line-of-sight, urban micro scenario, 3GPP \cite{3Gpp:09} \\
 \hline
Cell radius  & $500$ meters \\
    \hline
    Transceiver antenna gain & $0$ dBi \\
    \hline
    Thermal noise power, ${\cal E} \{\abs{n_{\mathrm{s}_k}}^2\}$, $\sigma_{\mathrm{PU}_j}^2$ &  $-107.35$ dBm   \\
    \hline
     Maximum tolerable received interference power at  primary receiver $j$, $P_{\mathrm{I}_j}$ &  $-110.35$ dBm    \\
    \hline
    Minimum requirement on the SINR of layers  $[\Gamma_{\mathrm{req}_1},\,\,\Gamma_{\mathrm{req}_2}]$ & $[\Gamma_{\mathrm{Base}},\,\, \Gamma_{\mathrm{Base}}+3 ]$ dB   \\
        \hline
         Maximum tolerable SINR for information decoding at unintended primary receivers,  $\Gamma_{\mathrm{tol}}$ & $0$ dB \\
                 \hline
         Maximum tolerable data rate at primary receiver,  $R_{\mathrm{Eav}_{j,k}}$ & $1$ bit/s/Hz \\
                \hline
                   Transmit power of primary transmitter   &  $5$ dBm \\
                \hline
\end{tabular}
\end{table}

\subsection{Average Total Transmit Power versus Minimum Required SINR}
Figure \ref{fig:p_SNR} depicts the  average total transmit power versus the minimum required SINR of the base layer, $\Gamma_{\mathrm{Base}}$,    for $N_{\mathrm{T}}=8$ transmit antennas,  $K=2$ secondary receivers, $J=2$ primary receivers, and different resource  allocation schemes.
It can be observed that  the average total transmit power for the proposed schemes
is a monotonically increasing function with respect to the minimum required SINR of the base layer.  Clearly, the transmitter has to allocate more power to the information signal
  as the  SINR requirement gets more stringent.
    Besides, the two   proposed suboptimal resource allocation schemes approach the optimal performance.    In fact,  the proposed suboptimal schemes exploit the possibility of achieving the global optimal solution via SDP relaxation. { We note that extensive simulations (not shown
here) have revealed that, for the considered scenarios, the percentage of rank-one solution of the SDP relaxed problem in \eqref{eqn:SDP_transformed} ranges from $75\%$ to $100\%$. Nevertheless, the proposed optimal algorithm is always able to reconstruct the optimal solution by utilizing the solution of the dual problem.}

\begin{figure}[t]
 \centering
\includegraphics[width=3.5in]{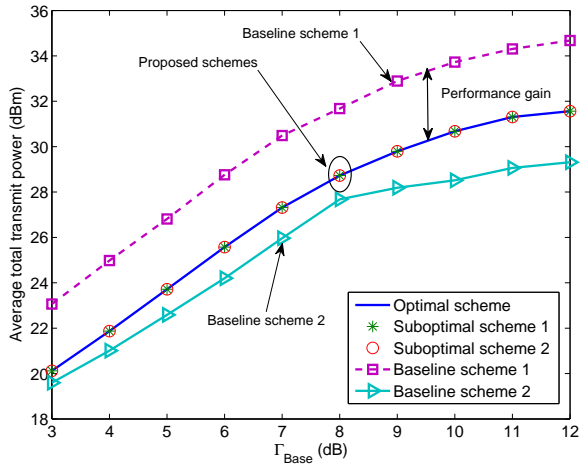}
\caption{Average total transmit power (dBm) versus minimum required SINR of the base layer, $\Gamma_{\mathrm{Base}}$.   } \label{fig:p_SNR}
\end{figure}

For comparison, Figure
\ref{fig:p_SNR} also contains results for the average total transmit power of two baseline resource allocation schemes. For baseline scheme 1, we adopt single-layer transmission for  delivering the multiuser video signals. In particular,
we solve the corresponding robust optimization problem with respect to $\{\mathbf{W}_{l,k},\mathbf{V},\omega_j,\delta_{k,j}\}$ subject to constraints C1 -- C9 via
 SDP relaxation. The minimum required SINR  for decoding the single-layer video information  at the secondary receivers for baseline scheme 1 is set to\footnote{We note that the actual data rate for multi-layer and single-layer transmission depends heavily on the adopted video coding algorithm. In order to isolate the performance study from the video coding implementation details, we adopt the information theoretic approach which focuses on the channel dependent achievable data rate.  } $\Gamma_{\mathrm{req}_k}^{\mathrm{Single}}=2^{\sum_{l=1}^{L_k}\log_2(1+\Gamma_{\mathrm{req}_{l,k}})}-1$. In baseline scheme 2, we consider a naive layered video transmission. Specifically, the secondary transmitter treats the estimated CSI of the primary receivers as perfect CSI and exploits it for resource allocation. In other words, robustness against CSI errors is not provided by baseline scheme 2.   It can be observed that baseline scheme 1 requires a higher total average power  compared to the proposed resource allocation schemes. This can be attributed to the fact that single-layer transmission  does not posses the   \emph{self-protecting} structure for providing secure communication  that layered transmission has. As a result,
 a higher transmit power is required in baseline scheme 1 to ensure secure video delivery. On the other hand, it is expected that for  baseline scheme 2, the average transmit power is lower than that of the proposed scheme.  This is due to the fact that the secondary transmitter assumes the available CSI is perfect and transmits with insufficient power for providing secure communication. The next sections will show that baseline scheme 2 cannot meet the QoS requirements regarding communication security and interference leakage to the primary network.
\begin{figure}[t] \centering
\includegraphics[width=3.5in]{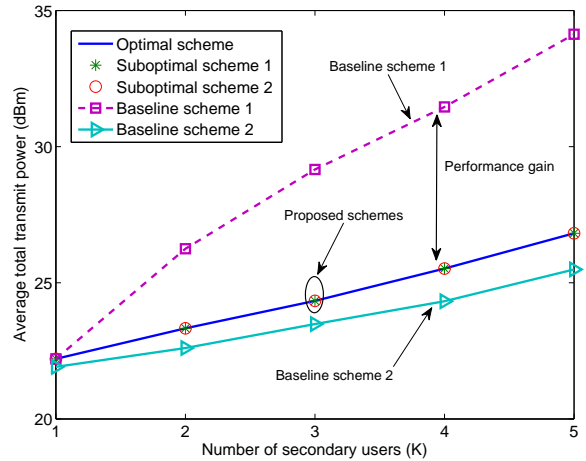}
\caption{Average total transmit power (dBm) versus the number of secondary receivers. } \label{fig:pt_k}

\end{figure}

\subsection{Average Total Transmit Power versus Number of Secondary Receivers}
Figure \ref{fig:pt_k} illustrates the average total transmit power versus the
number of secondary receivers for a minimum required SINR of the base layer of $\Gamma_{\mathrm{Base}}=5$ dB, $J=1$ primary receiver,  $N_{\mathrm{T}}=8$ transmit antennas,  and  different resource allocation schemes.
 It can be seen that the average total transmit power
increases  with the number of secondary receivers for all resource allocation schemes.  In fact,  the requirement of secure communication becomes more difficult to meet if there are more secondary receivers in the system. Besides, more degrees of freedom are utilized for reducing the mutual interference between the secondary receivers which leads to a less efficient power allocation. Hence, a higher total transmit power is required to meet the target QoS.

 On the other hand,  the two   proposed suboptimal resource allocation schemes achieve
a similar performance as the optimal resource allocation scheme. Also, the proposed schemes provide substantial
power savings compared to  baseline scheme 1 for $K>1$ due to the adopted layered transmission.  In particular, the performance gap between the proposed schemes and baseline scheme 1 increases with increasing number of secondary receivers.  In other words, layered transmission is effective for reducing the transmit power in multi-receiver environments with secrecy constraints, due to the self-protecting property. As for baseline scheme 2, although it consumes less transmit power compared to the optimal scheme, it cannot guarantee the QoS in communication secrecy and interference to the primary receivers, cf. Figures \ref{fig:cap_SINR} -- \ref{fig:interference_users}.
\begin{figure}[t]
 \centering
\includegraphics[width=3.5in]{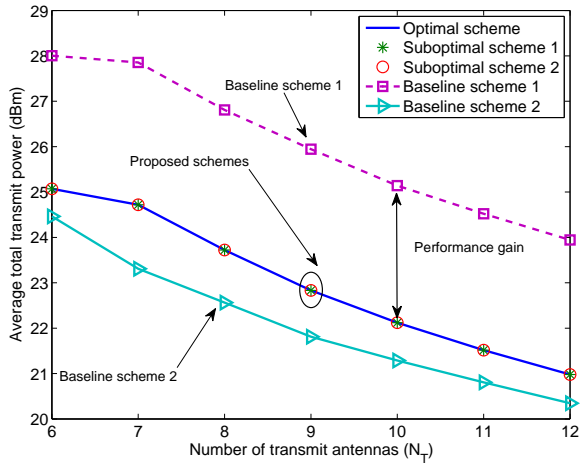}
\caption{Average total transmit power (dBm) versus the number of transmit antennas, $N_{\mathrm{T}}$.  } \label{fig:pt_nt}
\end{figure}

\subsection{Average Total Transmit Power versus Number of Antennas}
Figure \ref{fig:pt_nt} shows the average total transmit power versus the
number of transmit antennas, $N_{\mathrm{T}}$,  for a minimum required SINR of the base layer of $\Gamma_{\mathrm{Base}}=5$ dB, $J=2$ primary receivers, $K=2$ secondary receivers, and  different resource allocation schemes.
 It is expected that the average total transmit power
decreases for all resource allocation schemes with increasing number of transmit antennas. This is because extra degrees of freedom
 can be exploited for resource allocation when more antennas are available at the transmitter. Specifically, with more antennas, the direction of beamforming matrix $\mathbf{W}_{l,k}$ can be more accurately steered towards the secondary receivers which reduces both the power consumption at the secondary transmitter and the power leakage to the primary receivers.
On the other hand,   the proposed schemes  provide  substantial power savings compared to  baseline scheme 1 for all considered
scenarios because of the adopted layered transmission. Besides, baseline scheme 2 consumes less transmit power compared to the optimal scheme again. Although baseline scheme 2 can exploit the extra degrees of freedom offered by the increasing number of antennas, it is unable to protect the primary receivers from interference and cannot guarantee communication security due to the imperfection of the CSI, cf. Figures \ref{fig:cap_SINR} -- \ref{fig:interference_users}.

\begin{figure}[t]
 \centering
\includegraphics[width=3.5in]{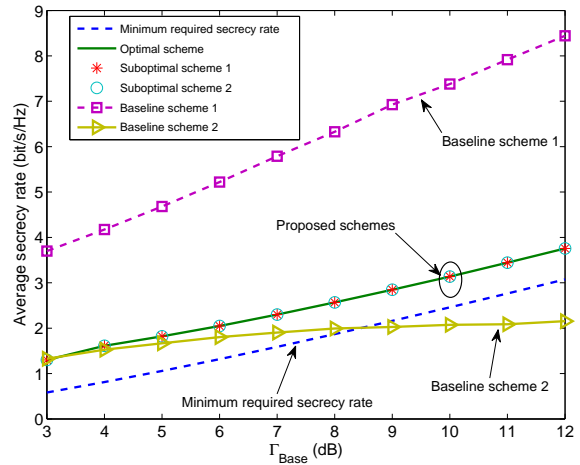}
\caption{Average secrecy rate (bit/s/Hz) of the base layer versus the minimum required SINR of the base layer, $\Gamma_{\mathrm{Base}}$.  } \label{fig:cap_SINR}
\end{figure}
\subsection{Average Secrecy Rate}
Figure \ref{fig:cap_SINR} depicts the average secrecy rate  of the base layer versus  the minimum required SINR of the base layer for $N_{\mathrm{T}}=8$ transmit antennas,  $K=2$ secondary receivers, $J=2$ primary receivers, and different resource  allocation schemes. Despite the imperfection of the CSI, the proposed  optimal resource allocation scheme and the two suboptimal resource allocation schemes are able to guarantee the minimum secrecy rate defined by constraints C1, C2, and C4 in every time instant, because of the adopted robust optimization framework. On the other hand, baseline scheme 1 achieves an exceedingly high average secrecy rate since the entire video information is encoded in the first layer. The superior secrecy rate performance of  baseline scheme 1  comes at the expense of an exceedingly high transmit power, cf. Figure \ref{fig:p_SNR}. In the low  $\Gamma_{\mathrm{Base}}$ regime,  even though baseline  scheme 2 is able to meet the  minimum secrecy rate requirement on average, we  emphasize that baseline scheme 2 is unable to fulfill the requirement for all channel realizations, i.e., secure communication is not ensured.  Besides, in the high  $\Gamma_{\mathrm{Base}}$ regime, in contrast to the proposed schemes, baseline scheme 2 cannot even satisfy the minimum secrecy rate requirement on average\footnote{ We note that the performance of  baseline schemes without artificial noise generation is not shown in the paper since a feasible solution cannot be found under the adopted simulation parameters.}.

\begin{figure}[t]
 \centering
 \includegraphics[width=3.5in]{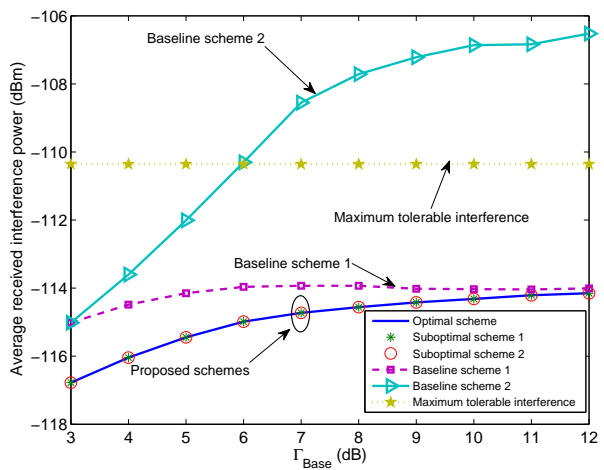}
\caption{Average received interference power (dBm) at each primary receiver versus the minimum required SINR of the base layer, $\Gamma_{\mathrm{Base}}$.} \label{fig:interference_SINR}
\end{figure}

\subsection{Average Interference Power}
Figure \ref{fig:interference_SINR} depicts the average received interference power at each primary receiver  versus the minimum required SINR of the base layer $\Gamma_{\mathrm{Base}}$,    for $N_{\mathrm{T}}=8$ transmit antennas, $K=2$ secondary receivers, $J=2$ primary receivers,  and different resource  allocation schemes. As can be observed, the proposed  optimal resource allocation scheme and the two suboptimal resource allocation schemes are able to control their transmit power such that the received interference powers at the primary receivers are below the maximum tolerable interference threshold. Similar results can be observed for baseline scheme 1 as robust optimization is also adopted in this case. As for baseline scheme 2,  although  the average interference received by each primary receiver is below the maximum tolerable threshold for $\Gamma_{\mathrm{Base}}\le 6$ dB, baseline scheme 2 cannot meet the interference requirement for all channel realizations. Besides, as the value of $\Gamma_{\mathrm{Base}}$ increases, the received interference power at each primary receiver increases significantly compared to the proposed schemes. For high values of $\Gamma_{\mathrm{Base}}$,  even the average received interference at each primary receiver for baseline scheme 2 exceeds the maximum tolerable interference limit.

\begin{figure}[t]
 \centering
\includegraphics[width=3.5in]{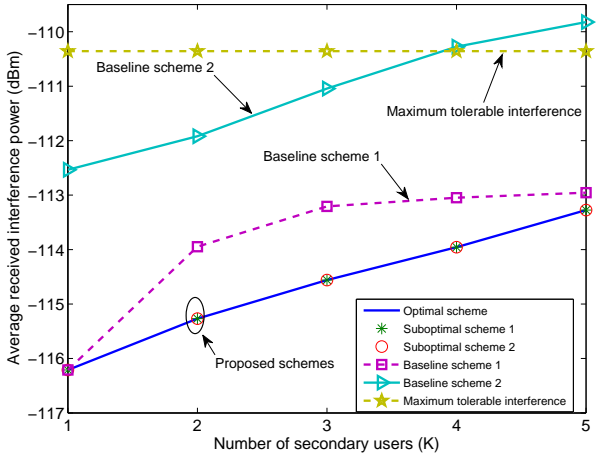}
\caption{Average received interference power (dBm) at each primary receiver versus the number of secondary receivers, $K$. } \label{fig:interference_users}
\end{figure}

Figure \ref{fig:interference_users} shows the average received interference power at each primary receiver  versus the number of secondary receivers $K$ for a minimum required base layer  SINR of $\Gamma_{\mathrm{Base}}=5$ dB,    $N_{\mathrm{T}}=8$ transmit antennas,  $J=1$ primary receiver,  and different resource  allocation schemes. It can be observed that the received interference power at each primary receiver increases with the number of secondary receivers since the secondary transmitter is required to transmit with higher power to serve extra receivers.   The proposed schemes and baseline scheme 1 are able to control the interference leakage to the primary network for any number of secondary receivers.  However,   baseline scheme 2 fails to properly control the transmit power and cannot satisfy the maximum tolerable received interference limit for  all channel realizations, due to the  non-robust resource allocation algorithm design.

\section{Conclusions}\label{sect:conclusion}
In this paper,  we studied  the robust resource allocation
algorithm design for transmit power minimization  in secure layered video transmission in secondary CR networks. The algorithm design was  formulated as a non-convex optimization problem taking into account the communication secrecy for transmission to the secondary receivers, the self-protecting structure of layered transmission,  the  imperfect knowledge of the CSI of the channels to the primary receivers,  and the interference leakage to the primary network. We showed that  the global optimal solution of the considered non-convex optimization problem can be constructed based on the primal and the dual solutions of the SDP relaxed problem. Furthermore, two suboptimal resource allocation schemes were proposed for the case when the dual problem solution is unavailable for construction of the optimal solution.  Simulation results
  unveiled the power savings enabled by the  layered transmission and the robustness of our proposed optimal scheme against the imperfect CSI of the primary receiver channels.

\section*{Appendix}

\subsection{Proof of Proposition 1}\label{sect:proof-lemma1}
Constraint C4 is non-convex due to the  log-determinant function and the coupling between optimization variables $\mathbf{W}_{l,k}$ and $\mathbf{V}$. In light of
the intractability of the constraint, we first establish a lower bound on the left hand side of  C4. Then, we will reveal the tightness of the proposed lower bound.  We now start the proof by rewriting C4  as
\begin{subequations}\label{eqn:det_0}
\begin{eqnarray} \label{eqn:det_a}
\hspace*{-6mm}\mbox{C4: }\log_2\det\Big(\mathbf{I}_{N_{\mathrm{P}_\mathrm{R}}}+\mathbf{\Sigma}_{j}^{-1}\mathbf{G}_j^H
\mathbf{W}_{1,k} \mathbf{G}_j\hspace*{-0.5mm} \Big)\hspace*{-0.5mm} \hspace*{-3mm}&\le&\hspace*{-3mm} R_{\mathrm{Eav}_{j,k}}\\
 \label{eqn:det}\hspace*{-6mm}\stackrel{(a)}{\Longleftrightarrow }\det\Big(\mathbf{I}_{N_{\mathrm{P}_\mathrm{R}}}+\mathbf{\Sigma}_{j}^{\frac{-1}{2}}\mathbf{G}_j^H
\mathbf{W}_{1,k} \mathbf{G}_j\mathbf{\Sigma}_{j}^{\frac{-1}{2}}\hspace*{-0.5mm} \Big)\hspace*{-0.5mm} \hspace*{-3mm}&\le&\hspace*{-3mm}1\hspace*{-0.5mm}+\hspace*{-0.5mm} \xi_{\mathrm{Eav}_{j,k}},
\end{eqnarray}
\end{subequations}
where  $(a)$ is due to the fact that $\mathbf{\Sigma}_{j}\succ\mathbf{0}$ and $\det(\mathbf{I}+\mathbf{AB})=\det(\mathbf{I}+\mathbf{BA})$ holds for   any  choice of matrices $\mathbf{A}$ and $\mathbf{B}$. Then, we introduce the following lemma which provides a lower bound on the left hand side of (\ref{eqn:det}).

\begin{Lem}\label{lemma:det_trace} For any square matrix $\mathbf{A}\succeq \mathbf{0}$, we have the following inequality \cite{JR:Ken_AN_PHY}:
\begin{eqnarray}
\det(\mathbf{I}+\mathbf{A})\ge 1+\Tr(\mathbf{A}),
\end{eqnarray}
\end{Lem}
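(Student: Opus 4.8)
The plan is to reduce the matrix inequality to a scalar statement about eigenvalues, which is then elementary. Since $\mathbf{A}\succeq\mathbf{0}$ is a positive semidefinite (hence Hermitian) matrix, the spectral theorem guarantees that $\mathbf{A}$ admits a unitary diagonalization $\mathbf{A}=\mathbf{U}\mathbf{\Lambda}\mathbf{U}^H$ with real eigenvalues $\lambda_1,\ldots,\lambda_n\ge 0$. The crucial observation is that both sides of the claimed inequality depend on $\mathbf{A}$ only through its eigenvalues: the trace and the determinant are invariant under unitary similarity, and $\mathbf{I}+\mathbf{A}=\mathbf{U}(\mathbf{I}+\mathbf{\Lambda})\mathbf{U}^H$. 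Hence the problem is equivalent to a purely scalar inequality among the nonnegative numbers $\lambda_i$.

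Concretely, I would first write
\begin{eqnarray}
\det(\mathbf{I}+\mathbf{A})=\prod_{i=1}^n(1+\lambda_i)\quad\text{and}\quad \Tr(\mathbf{A})=\sum_{i=1}^n\lambda_i,
\end{eqnarray}
so that the target inequality becomes $\prod_{i=1}^n(1+\lambda_i)\ge 1+\sum_{i=1}^n\lambda_i$. The key step is then to expand the product fully. The expansion of $\prod_{i=1}^n(1+\lambda_i)$ is exactly the sum of all elementary symmetric polynomials in the $\lambda_i$, i.e.\ $1+\sum_i\lambda_i+\sum_{i<j}\lambda_i\lambda_j+\cdots+\lambda_1\cdots\lambda_n$. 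Because every $\lambda_i\ge 0$, each of these higher-order terms is nonnegative, and discarding all of them except the constant term and the first-order sum yields precisely $\prod_{i=1}^n(1+\lambda_i)\ge 1+\sum_{i=1}^n\lambda_i$, with equality when at most one eigenvalue is nonzero.

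There is no substantive obstacle here; the only thing requiring care is the passage to eigenvalues, namely invoking that $\mathbf{A}\succeq\mathbf{0}$ forces all eigenvalues to be real and nonnegative so that the cross terms in the expansion have a definite sign. An equally clean alternative, which avoids explicitly writing the symmetric-polynomial expansion, is a short induction on the dimension $n$: assuming $\det(\mathbf{I}+\mathbf{A}_{n-1})\ge 1+\Tr(\mathbf{A}_{n-1})$ for the leading $(n-1)\times(n-1)$ diagonal block in the eigenbasis, multiplying by the remaining factor $(1+\lambda_n)$ and using $\lambda_n\ge 0$ recovers the bound for dimension $n$. Either route is routine once the diagonalization is in place, so I would present the direct expansion argument as the cleanest.
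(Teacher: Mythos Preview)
Your argument is correct: diagonalizing $\mathbf{A}\succeq\mathbf{0}$ and expanding $\prod_i(1+\lambda_i)$ into elementary symmetric polynomials immediately gives the bound, and your remark that equality holds precisely when at most one eigenvalue is nonzero recovers the paper's stated equality condition $\Rank(\mathbf{A})\le 1$. Note, however, that the paper does not actually supply a proof of this lemma; it simply quotes the result from \cite{JR:Ken_AN_PHY} and uses it as a black box, so there is no in-paper argument to compare against---your proof is the standard one and would serve perfectly well as the omitted justification.
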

where equality holds if and only if $\Rank(\mathbf{A})\le 1$.

Exploiting Lemma \ref{lemma:det_trace}, the left hand side of (\ref{eqn:det}) is lower bounded by
\begin{eqnarray}\label{eqn:det_trace}
&&\det(\mathbf{I}_{N_{\mathrm{P}_\mathrm{R}}}+\mathbf{\Sigma}_{j}^{\frac{-1}{2}}\mathbf{G}_j^H
\mathbf{W}_{1,k} \mathbf{G}_j\mathbf{\Sigma}_{j}^{\frac{-1}{2}})\notag\\
&\ge& 1+\Tr(\mathbf{\Sigma}_{j}^{\frac{-1}{2}}\mathbf{G}_j^H
\mathbf{W}_{1,k} \mathbf{G}_j\mathbf{\Sigma}_{j}^{\frac{-1}{2}}).
\end{eqnarray}
Subsequently, by combining equations (\ref{eqn:det_0}) and (\ref{eqn:det_trace}), we have the following implications
\begin{subequations}
\begin{eqnarray}
\hspace*{-10mm} &&\mbox{(\ref{eqn:det_a})}\\
\hspace*{-10mm} &&\Longleftrightarrow \mbox{(\ref{eqn:det})}\Longrightarrow \Tr(\mathbf{\Sigma}_{j}^{\frac{-1}{2}}\mathbf{G}_j^H
\mathbf{W}_{1,k} \mathbf{G}_j\mathbf{\Sigma}_{j}^{\frac{-1}{2}})\le \xi_{\mathrm{Eav}_{j,k}}\\
\hspace*{-10mm}&&\stackrel{(b)}{ \Longrightarrow } \lambda_{\max}(\mathbf{\Sigma}_{j}^{\frac{-1}{2}}\mathbf{G}_j^H
\mathbf{W}_{1,k} \mathbf{G}_j\mathbf{\Sigma}_{j}^{\frac{-1}{2}})\le \xi_{\mathrm{Eav}_{j,k}}\\
\hspace*{-10mm}&&\Longleftrightarrow\mathbf{\Sigma}_{j}^{\frac{-1}{2}}\mathbf{G}_j^H
\mathbf{W}_{1,k} \mathbf{G}_j\mathbf{\Sigma}_{j}^{\frac{-1}{2}}\preceq\xi_{\mathrm{Eav}_{j,k}}\mathbf{I}_{N_{\mathrm{P}_\mathrm{R}}}\\
\hspace*{-10mm}&&\Longleftrightarrow \mathbf{G}_j^H
\mathbf{W}_{1,k} \mathbf{G}_j\preceq\xi_{\mathrm{Eav}_{j,k}}\mathbf{\Sigma}_{j}, \label{eqn:trace_final}
\end{eqnarray}
\end{subequations}
where $(b)$ is due to $\Tr(\mathbf{A})\ge\lambda_{\max}(\mathbf{A})$ for a positive semidefinite square matrix $\mathbf{A}\succeq \mathbf{0}$. We note that  $\Tr(\mathbf{A})\ge\lambda_{\max}(\mathbf{A})$ holds if and only if  $\Rank(\mathbf{A})\le 1$. Thus, in general,  the set spanned by $\mbox{(\ref{eqn:det_a})}$ is  a subset of the set spanned by $\mbox{(\ref{eqn:trace_final})}$. Besides,  $\mbox{(\ref{eqn:det_a})}$ is equivalent to $\mbox{(\ref{eqn:trace_final})}$ when $\Rank(\mathbf{W}_{1,k})\le 1,\forall k$.

\subsection{Proof of Theorem 1}
 The proof is divided into two parts. We first study the structure of the optimal solution $\mathbf{W}^*_{l,k}$ of the relaxed version of problem (\ref{eqn:SDP_transformed}). Then, if $\exists l,k: \Rank(\mathbf{W}^*_{l,k})>1$,  we propose a method to construct a solution $\mathbf{\widetilde \Lambda}\triangleq\{\mathbf{{ \widetilde W}}_{l,k},{\mathbf{\widetilde V}},\widetilde \omega_j,\widetilde\delta_{k,j}\}$ that not only achieves the same objective value as $\mathbf{ \Lambda}^*\triangleq\{\mathbf{ W}_{l,k}^*,{\mathbf{V}}^*, \omega_j^*,\delta_{k,j}^*\}$, but also admits a rank-one beamforming matrix $\mathbf{\widetilde W}_{l,k}$.

The relaxed version of problem (\ref{eqn:SDP_transformed}) is jointly convex with respect to the optimization variables and satisfies Slater's constraint qualification. As a result, the Karush-Kuhn-Tucker (KKT) conditions are necessary and sufficient conditions \cite{book:convex} for the optimal solution of the relaxed version of problem (\ref{eqn:SDP_transformed}).   The Lagrangian function  of the relaxed version of problem (\ref{eqn:SDP_transformed}) is
\begin{eqnarray}\hspace*{-3mm}&&{\cal
L}\\ \notag
\hspace*{-3mm}&=&\hspace*{-3mm}\sum_{k=1}^K \sum_{l=1}^{L_k} \Tr(\mathbf{W}_{l,k})+\sum_{k=1}^K \sum_{l=1}^{L_k}\gamma_{l,k}\Bigg\{\Gamma_{\mathrm{req}_{l,k}}\notag\\ \notag
\hspace*{-3mm}&\times&\Big[\Tr\Big(\mathbf{H}_k\Big(\sum_{n\ne k}\sum_{l=1}^{L_n}\mathbf{W}_{l,n}+\sum_{m=l+1}^{L_k}\mathbf{W}_{m,k}-
\mathbf{W}_{l,k}\Big)\Big)\Big]\Bigg\}\\
\hspace*{-3mm}&+&\hspace*{-3mm}\sum_{t=1}^K \sum_{k\ne t}\psi_{t,k}\Bigg\{\Tr(\mathbf{H}_t\mathbf{W}_{1,k})-\Gamma_{\mathrm{tol}}\notag\\
\hspace*{-3mm}&\times&\hspace*{-3mm}\Big[\Tr\Big(\mathbf{H}_t\Big(\sum_{n\ne k}^K\sum_{l=1}^{L_n}\mathbf{W}_{l,n}+\sum_{m=2}^{L_k}\mathbf{W}_{m,k}\Big)\Big)\Big]\Bigg\}
\hspace*{-0.5mm}+\hspace*{-0.5mm}\Omega\notag\\
\hspace*{-3mm}&-&\hspace*{-3mm}  \sum_{j=1}^{J}\hspace*{-0.5mm}\Tr\hspace*{-0.5mm}\Big(\hspace*{-0.5mm} \mathbf{S}_{\mathrm{\overline{C3}}_j}\hspace*{-0.5mm}\big(\hspace*{-0.5mm}\mathbf{W}_{l,k},
\mathbf{V},\omega_j\hspace*{-0.5mm}\big)\mathbf{D}_{\mathrm{\overline{C3}}_j}\hspace*{-0.5mm}\Big)\hspace*{-0.5mm}-\hspace*{-0.5mm}\sum_{k=1}^K\hspace*{-0.5mm}\sum_{l=1}^{L_k}\hspace*{-0.5mm}\Tr\hspace*{-0.5mm}
(\mathbf{W}_{l,k}\mathbf{Y}_{l,k}\hspace*{-0.5mm})\notag\\
\hspace*{-3mm}&-&  \hspace*{-0.5mm} \sum_{k=1}^K\sum_{j=1}^{J}\hspace*{-0.5mm}\Tr\hspace*{-0.5mm}\Big(\hspace*{-0.5mm} \mathbf{S}_{\mathrm{\overline{C4}}_{k,j}}\hspace*{-0.5mm}\big(\hspace*{-0.5mm}\mathbf{W}_{1,k},
\mathbf{V},\delta_{k,j}\hspace*{-0.5mm}\big)\mathbf{D}_{\mathrm{\overline{C4}}_{k,j}}\hspace*{-0.5mm}\Big)
,\notag
\label{eqn:Lagrangian}
\end{eqnarray}
where $\Omega$ denotes the collection of the terms that only involve variables that are not relevant for the proof.  $\gamma_{l,k}\ge 0,k\in\{1,\ldots,K\},l\in\{1,\ldots,L_k\},$ and $\psi_{t,k}\ge 0,t\in\{1,\ldots,K\},
$ are the Lagrange multipliers associated with constraints C1 and C2, respectively. Matrix $\mathbf{Y}_{l,k}\succeq \mathbf{0}$ is the Lagrange multiplier matrix corresponding to the semidefinite constraint on matrix $\mathbf{W}_{l,k}$ in C6. $\mathbf{D}_{\mathrm{\overline{C3}}_j}\succeq \mathbf{0},\forall j\in\{1,\,\ldots,\,J\},$ and $\mathbf{D}_{\mathrm{\overline{C4}}_{k,j}}\succeq \mathbf{0},\forall k\in\{1,\ldots,K\},j\in\{1,\,\ldots,\,J\}$, are
the Lagrange multiplier matrices for the interference temperature constraint and the maximum tolerable SINRs of the  secondary receivers in $\mbox{\textoverline{C3}}$ and $\mbox{\textoverline{C4}}$, respectively.
In the following, we focus on the KKT conditions related to the optimal $\mathbf{W}^*_{l,k}$:
\begin{eqnarray}\label{eqn:KKT}
\mathbf{Y}^*_{l,k},\,\mathbf{D}^*_{\mathrm{\overline{C3}}_j},\,\mathbf{D}^*_{\mathrm{\overline{C4}}_{k,j}}\hspace*{-1.5mm} &\succeq&\hspace*{-1.5mm} \mathbf{0},\gamma^*_{l,k},\psi^*_{t,k}\hspace*{-0.5mm}\ge0,\\
 \mathbf{Y}^*_{l,k}\mathbf{W}^*_{l,k}\hspace*{-1.5mm} &=&\hspace*{-1mm} \mathbf{0}, \label{eqn:KKT-complementarity}\\
\nabla_{\mathbf{W}^*_{l,k}}{\cal
L}\hspace*{-1.5mm} &=&\hspace*{-1.0mm} \mathbf{0}, \label{eqn:KKT-gradient}
\end{eqnarray}
where $\mathbf{Y}^*_{l,k},\mathbf{D}^*_{\mathrm{\overline{C3}}_j},\mathbf{D}^*_{\mathrm{\overline{C4}}_{k,j}},\gamma^*_{l,k},$ and $\psi^*_{t,k}$, are the optimal Lagrange multipliers for the dual problem of (\ref{eqn:SDP_transformed}). From the complementary slackness condition in (\ref{eqn:KKT-complementarity}),  we observe that the columns of $\mathbf{W}^*_{l,k}$ are required to lie in the null space of $\mathbf{Y}^*_{l,k}$ for $\mathbf{W}^*_{l,k}\ne \mathbf{0}$.  Thus, we study  the  composition of $\mathbf{Y}^*_{l,k}$ to obtain the structure of  $\mathbf{W}^*_{l,k}$. The KKT condition in (\ref{eqn:KKT-gradient}) leads to
\begin{eqnarray}\label{eqn:KKT-gradient-equivalent}
\mathbf{Y}^*_{l,k}\hspace*{-2mm}&=&\hspace*{-2mm}
\mathbf{A}_{l,k}-\Big[\gamma_{l,k}^*-\sum_{t<l}\gamma_{t,k}^*\Gamma_{\mathrm{req}_{t,k}}\Big]\mathbf{H}_k\\
\mathbf{A}_{l,k}\hspace*{-2mm}&=&\hspace*{-2mm}
 \left\{\begin{array}{ll}\mathbf{B}_{l,k}\hspace*{-0.5mm}+\hspace*{-0.5mm} \mathbf{C}_{t,k}
 &\mbox{if $l=1$} \\
 \mathbf{B}_{l,k}- \underset{t\ne k}\sum \psi_{t,k}^*\Gamma_{\mathrm{tol}}\mathbf{H}_t&\mbox{otherwise}
       \end{array} \right.\\
        \mathbf{C}_{t,k}\hspace*{-2mm}&=&\hspace*{-2mm}\underset{t\ne k}\sum \psi_{t,k}^*\mathbf{H}_t\hspace*{-0.5mm}+\hspace*{-0.5mm} \overset{K}{\underset{k=1}\sum}\overset{J}{\underset{j=1}\sum}\mathbf{R}_{j}\mathbf{D}_{\mathrm{\overline{C4}}_{k,j}}^*\mathbf{R}_{j}^H,\\
   \mathbf{B}_{l,k}\hspace*{-2mm}&=&\hspace*{-2mm} {\underset{m\ne k}\sum}  \overset{L_m} {\underset{r=1}\sum} \gamma_{r,m}^*\Gamma_{\mathrm{req}_{r,m}}-\Gamma_{\mathrm{tol}}\Big[ {\underset{t\ne k}\sum}  {\underset{n\ne t,k}\sum}\psi_{t,n}^*\mathbf{H}_t \Big]\notag\\
 &\hspace*{-0.5mm}+\hspace*{-0.5mm}& \overset{J} {\underset{j=1}\sum}\overset{N_{\mathrm{R}}} {\underset{q=1}\sum}\Big[    \mathbf{U}_{\mathbf{g}_j}\mathbf{D}_{\mathrm{\overline{C3}}_{j}}^*\mathbf{U}_{\mathbf{g}_j}^H\Big]_{a:b,c:d}.
\end{eqnarray}
Subscripts $a,b,c,d$ are given by $a=(q-1)N_{\mathrm{T}}+1,b=q N_{\mathrm{T}},c=(q-1)N_{\mathrm{T}}+1,$ and $d=q N_{\mathrm{T}}$, respectively.
 Without loss of generality, we define $r_{l,k}=\Rank(\mathbf{A}^*_{l,k})$ and the orthonormal basis of the null space of $\mathbf{A}^*_{l,k}$ as $\mathbf{\mathbf{\Upsilon}}\in\mathbb{C}^{N_{\mathrm{T}}\times (N_{\mathrm{T}}-r_{l,k})}$ such that $\mathbf{A}^*_{l,k}\mathbf{\Upsilon}_{l,k}=\mathbf{0}$ and $\Rank(\mathbf{\Upsilon}_{l,k})=N_{\mathrm{T}}-r_{l,k}$. Let ${\boldsymbol \phi}_{\tau_{l,k}}\in \mathbb{C}^{N_{\mathrm{T}}\times 1}$, $1\le \tau_{l,k}\le N_{\mathrm{T}}-r_{l,k}$, denote the $\tau_{l,k}$-th column  of $\mathbf{\Upsilon}_{l,k}$. By exploiting \cite[Proposition 4.1]{JR:rui_zhang},  it can be shown that $\Big[\gamma_{l,k}^*-\sum_{t<l}\gamma_{t,k}^*\Gamma_{\mathrm{req}_{t,k}}\Big]\mathbf{H}_k\ne \zero$ and $\mathbf{H}_k\mathbf{\Upsilon}_{l,k}=\mathbf{0}$ for the optimal solution. Besides, we can express the optimal solution of  $\mathbf{W}^*_{l,k}$  as
\begin{eqnarray}\label{eqn:general_structure}
\mathbf{W}^*_{l,k}=\sum_{\tau_{l,k}=1}^{N_{\mathrm{T}}-r_{l,k}} \alpha_{\tau_{l,k}} {\boldsymbol \phi}_{\tau_{l,k}}  {\boldsymbol \phi}_{\tau_{l,k}} ^H  + \underbrace{f_{l,k}\mathbf{u}_{l,k}\mathbf{u}^H_{l,k}}_{\mbox{rank-one}},
\end{eqnarray}
where $\alpha_{\tau_{l,k}}\ge0, \forall \tau_{l,k}\in\{1,\ldots,N_{\mathrm{T}}-r_{l,k}\},$ and $f_{l,k}>0$ are positive scalars and $\mathbf{u}_{l,k}\in \mathbb{C}^{N_{\mathrm{T}}\times 1}$, $\norm{\mathbf{u}_{l,k}}=1$, satisfies $\mathbf{u}^H_{l,k}\mathbf{\Upsilon}_{l,k}=\mathbf{0}$. In particular, we have the following equality:
\begin{eqnarray}\label{eqn:general_structure}
\mathbf{H}_k\mathbf{W}^*_{l,k}=\underbrace{\sum_{\tau_{l,k}=1}^{N_{\mathrm{T}}-r_{l,k}}\mathbf{H}_k \alpha_{\tau_{l,k}} {\boldsymbol \phi}_{\tau_{l,k}}  {\boldsymbol \phi}_{\tau_{l,k}} ^H}_{=\zero}  + \mathbf{H}_k f_{l,k}\mathbf{u}_{l,k}\mathbf{u}^H_{l,k}.
\end{eqnarray}

In the second part of the proof, we construct another solution $\mathbf{\widetilde \Lambda}\triangleq\{\mathbf{{ \widetilde W}}_{l,k},{\mathbf{\widetilde V}},\widetilde \omega_j,\widetilde\delta_{k,j}\}$ based on  (\ref{eqn:general_structure}).  Suppose there exist pair of $l$ and $k$ such that $\Rank(\mathbf{W}^*_{l,k})>1$. Let
\begin{eqnarray}\label{eqn:rank-one-structure}\mathbf{\widetilde W}_{l,k}\hspace*{-2mm}&=&\hspace*{-2mm}f_{l,k}\mathbf{u}_{l,k}
\mathbf{u}^H_{l,k}=\mathbf{W}^*_{l,k}\hspace*{-0.5mm}-\hspace*{-0.5mm}\sum_{\tau_{l,k}=1}^{N_{\mathrm{T}}-r_{l,k}} \alpha_{l,k} {\boldsymbol \phi}_{\tau_{l,k}} {\boldsymbol \phi}_{\tau_{l,k}}^H,\\
 \mathbf{\widetilde V}\hspace*{-2mm}&=&\hspace*{-2mm}\mathbf{ V}^*\hspace*{-0.5mm}+\hspace*{-1.5mm}\sum_{\tau_{l,k}=1}^{N_{\mathrm{T}}-r_{l,k}} \alpha_{l,k} {\boldsymbol \phi}_{\tau_{l,k}} {\boldsymbol \phi}_{\tau_{l,k}}^H,\quad
\widetilde\omega_j=\delta^*_j,\,
\widetilde\delta_{k,j}=\delta_{k,j}^*.\,\notag
\end{eqnarray}
Then, we substitute the constructed solution $\mathbf{\widetilde \Lambda}$ into the objective function and the constraints in (\ref{eqn:SDP_transformed}) which yields \eqref{eqn:equivalent_objective} on the top of next page.

\begin{figure*}[ht]
\begin{eqnarray}\label{eqn:equivalent_objective}
 &&\hspace*{-10mm}\mbox{Objective value:}\quad \sum_{k=1}^K\sum_{l=1}^{L_k}\Tr(\mathbf{\widetilde W}_{l,k})+\Tr(\mathbf{\widetilde V})
 =\sum_{k=1}^K\sum_{l=1}^{L_k}\Tr(\mathbf{W}_{l,k}^*)+\Tr(\mathbf{V}^*) \\
 \label{eqn:C1_new_solution}
 \mbox{C1:}&&%\hspace*{-5mm} \frac{\Tr(\mathbf{H}_k\mathbf{\widetilde W}_{l,k})}{ \Tr\Big(\mathbf{H}_k \big(\overset{K}{\underset{n\ne k}{\sum}}\overset{L_n}{\underset{r=1}{\sum}}\mathbf{W}_{r,n}^*+\overset{L_k}{\underset{m=l+1}{\sum}} \mathbf{W}_{j,k}^*\big)\Big)+\Tr(\mathbf{H}_k\mathbf{\widetilde V})+\sigma_{\mathrm{s}_k}^2}\notag\\
 %=&&
 \hspace*{-5mm} \frac{\Tr\Big(\mathbf{H}_k\big(\mathbf{W}^*_{l,k}-\sum_{\tau_{l,k}=1}^{N_{\mathrm{T}}-r_{l,k}} \alpha_{l,k} {\boldsymbol \phi}_{\tau_{l,k}} {\boldsymbol \phi}_{\tau_{l,k}}^H\big)\Big)}{\Tr\Big(\mathbf{H}_k \big(\overset{K}{\underset{n\ne k}{\sum}}\overset{L_n}{\underset{r=1}{\sum}}\mathbf{W}_{r,n}^*+\overset{L_k}{\underset{m=l+1}{\sum}} \mathbf{W}_{m,k}^*\big)\Big)+\Tr\Big(\mathbf{H}_k(\mathbf{ V}^*+\sum_{\tau_{l,k}=1}^{N_{\mathrm{T}}-r_{l,k}} \alpha_{l,k} {\boldsymbol \phi}_{\tau_{l,k}} {\boldsymbol \phi}_{\tau_{l,k}}^H)\Big)+ \sigma_{\mathrm{s}_k}^2}\notag\\
  =&&\hspace*{-5mm}  \frac{\Tr(\mathbf{H}_k\mathbf{ W}_{l,k}^*)}{\Tr\Big(\mathbf{H}_k \big(\overset{K}{\underset{n\ne k}{\sum}}\overset{L_n}{\underset{r=1}{\sum}}\mathbf{W}_{r,n}^*+\overset{L_k}{\underset{m=l+1}{\sum}} \mathbf{W}_{m,k}^*\big)\Big)+\Tr(\mathbf{H}_k\mathbf{ V}^*)+\sigma_{\mathrm{s}_k}^2}\ge  \Gamma_{\mathrm{req}_k},\forall l, k,\notag\\
  \mbox{C2:}\notag&&\hspace*{-5mm}\frac{\Tr(\mathbf{H}_t\mathbf{\widetilde W}_{1,k})}{   \Tr\Big(\mathbf{H}_t\big(\overset{K}{\underset{n\ne t}{\underset{n\ne k}{\sum}}}\overset{L_n}{\underset{r=1}{\sum}}\mathbf{W}_{r,n}^*+\overset{L_k}{\underset{m=2}{\sum}}\mathbf{W}_{m,k}^*\big)\Big) +\Tr(\mathbf{H}_t\mathbf{\widetilde V })+\sigma_{\mathrm{s}_k}^2}\notag\\
    \le&&\hspace*{-5mm}   \frac{\Tr(\mathbf{H}_t\mathbf{ W}_{1,k}^*)}{   \Tr\Big(\mathbf{H}_t\big(\overset{K}{\underset{n\ne t}{\underset{n\ne k}{\sum}}}\overset{L_n}{\underset{r=1}{\sum}}\mathbf{W}_{r,n}^*+\overset{L_k}{\underset{m=2}{\sum}}\mathbf{W}_{m,k}^*\big)\Big) +\Tr(\mathbf{H}_t\mathbf{ V^*} )+\sigma_{\mathrm{s}_k}^2}
   \le \Gamma_{\mathrm{tol}},\forall t\ne k, t\in\{1,\ldots,K\},\notag\\
 \mbox{\textoverline{C3}:}\notag&&\hspace*{-5mm} \mathbf{S}_{\mathrm{\overline{C3}}_j}(\mathbf{\widetilde W}_{l,k},\mathbf{\widetilde V},\widetilde\omega_j)\succeq\mathbf{S}_{\mathrm{\overline{C3}}_j}(\mathbf{W}^*_{l,k},{\mathbf{V}}^*, \omega_j^*)\notag\\
 +&&\hspace*{-5mm} \mathbf{U}_{\mathbf{g}_j}^H\Big[\sum_{\tau_{l,k}=1}^{N_{\mathrm{T}}-r_{l,k}} \mathbf{I}_{N_{\mathrm{P}_\mathrm{R}}}\otimes \alpha_{l,k} {\boldsymbol \phi}_{\tau_{l,k}} {\boldsymbol \phi}_{\tau_{l,k}}^H\Big] \mathbf{U}_{\mathbf{g}_j} \succeq \mathbf{0},\forall j\in\{1,\ldots,J\},\notag\\
\mbox{\textoverline{C4}:}&&\hspace*{-5mm}\mathbf{S}_{\mathrm{\overline{C4}}_{k,j}}(\mathbf{\widetilde W}_{l,k},\mathbf{\widetilde V},\widetilde\omega_j)\succeq\mathbf{S}_{\mathrm{\overline{C4}}_{k,j}}(\mathbf{W}^*_{l,k},{\mathbf{V}}^*, \delta_{k,j}^*)+\mathbf{R}_{j}^H\Big[\sum_{\tau_{l,k}=1}^{N_{\mathrm{T}}-r_{l,k}} \alpha_{l,k} {\boldsymbol \phi}_{\tau_{l,k}} {\boldsymbol \phi}_{\tau_{l,k}}^H\Big]\mathbf{R}_{j} \succeq \mathbf{0},\forall k,j,\notag \\
\mbox{C5:}&&\hspace*{-5mm} \mathbf{\widetilde{W}}_{l,k}\succeq \mathbf{0},\quad\mbox{{C6}:}\,\mathbf{\widetilde{V}}\succeq \mathbf{0},
\quad\mbox{{C8}: }\widetilde\omega_j=\omega_j^*\ge0, \quad\mbox{{C9}: }\widetilde\delta_{i,k}=\delta_{j,k}^*\ge0.\label{eqn:C10_new_solution}\notag
\end{eqnarray}\hrulefill
\end{figure*}
It can be seen from (\ref{eqn:equivalent_objective}) that
the constructed solution set achieves the same optimal value
as the optimal solution  while satisfying all the constraints.
Thus, $\mathbf{\widetilde \Lambda}$ is also an optimal solution of (\ref{eqn:SDP_transformed}). Besides, the
constructed beamforming matrix $\mathbf{\widetilde W}_{l,k}$ is a rank-one matrix, i.e.,
$\Rank(\mathbf{\widetilde W}_{l,k}) = 1$. On the other hand, we can obtain the values
of $f_{l,k}$ and $\alpha_{l,k}$ in (\ref{eqn:rank-one-structure}) by substituting the variables in (\ref{eqn:rank-one-structure}) into
the relaxed version of (\ref{eqn:SDP_transformed}) and solving the resulting convex
optimization problem for $f_{l,k}$ and $\alpha_{l,k}$.

If there is more than one pair of $l$ and $k$ such that  $\Rank(\mathbf{W}_{l,k})>1$, then we employ (\ref{eqn:rank-one-structure}) more than once and construct the rank-one solution. Besides, the ordering of the $l$ and $k$ pairs  in constructing the optimal solution does not affect to the optimal objective value.

% Generated by IEEEtran.bst, version: 1.13 (2008/09/30)


\begin{thebibliography}{10}
\providecommand{\url}[1]{#1}
\csname url@samestyle\endcsname
\providecommand{\newblock}{\relax}
\providecommand{\bibinfo}[2]{#2}
\providecommand{\BIBentrySTDinterwordspacing}{\spaceskip=0pt\relax}
\providecommand{\BIBentryALTinterwordstretchfactor}{4}
\providecommand{\BIBentryALTinterwordspacing}{\spaceskip=\fontdimen2\font plus
\BIBentryALTinterwordstretchfactor\fontdimen3\font minus
  \fontdimen4\font\relax}
\providecommand{\BIBforeignlanguage}[2]{{%
\expandafter\ifx\csname l@#1\endcsname\relax
\typeout{** WARNING: IEEEtran.bst: No hyphenation pattern has been}%
\typeout{** loaded for the language `#1'. Using the pattern for}%
\typeout{** the default language instead.}%
\else
\language=\csname l@#1\endcsname
\fi
#2}}
\providecommand{\BIBdecl}{\relax}
\BIBdecl

\bibitem{CN:Kwan_layered_WCNC2014}
D.~W.~K. Ng, R.~Schober, and H.~Alnuweiri, ``{Power Efficient MISO Beamforming
  for Secure Layered Transmission},'' in \emph{Proc. IEEE Wireless Commun. and
  Networking Conf.}, Apr. 2014.

\bibitem{cisco_video}
\BIBentryALTinterwordspacing
{Cisco Visual Networking Index: Global Mobile Data Traffic Forecast Update
  2014–2019 White Paper}. [Online]. Available:
  \url{http://www.cisco.com/c/en/us/solutions/collateral/service-provider/visual-networking-index-vni/white_paper_c11-520862.html}
\BIBentrySTDinterwordspacing

\bibitem{JR:Video_layers}
Y.~Fallah, H.~Mansour, S.~Khan, P.~Nasiopoulos, and H.~Alnuweiri, ``{A Link
  Adaptation Scheme for Efficient Transmission of H.264 Scalable Video Over
  Multirate WLANs},'' \emph{IEEE Trans. Circuits Syst. Video Technol.},
  vol.~18, pp. 875--887, Jun. 2008.

\bibitem{JR:Video_layers2}
B.~Barmada, M.~Ghandi, E.~Jones, and M.~Ghanbari, ``{Prioritized Transmission
  of Data Partitioned H.264 Video with Hierarchical QAM},'' \emph{IEEE Signal
  Process. Lett.}, vol.~12, pp. 577--580, Aug. 2005.

\bibitem{JR:video_overview}
H.~Schwarz, D.~Marpe, and T.~Wiegand, ``{Overview of the Scalable Video Coding
  Extension of the H.264/AVC Standard},'' \emph{{ IEEE Trans. Circuits Syst.
  Video Technol.}}, vol.~17, pp. 1103--1120, Sep. 2007.

\bibitem{JR:Layer_transmission1}
W.~Mesbah, M.~Shaqfeh, and H.~Alnuweiri, ``{Jointly Optimal Rate and Power
  Allocation for Multilayer Transmission},'' \emph{IEEE Trans. Commun.},
  vol.~13, pp. 834--845, Feb. 2014.

\bibitem{JR:Layer_transmission2}
M.~Shaqfeh, W.~Mesbah, and H.~Alnuweiri, ``{Utility Maximization for Layered
  Transmission Using the Broadcast Approach},'' \emph{IEEE Trans. Wireless
  Commun.}, vol.~11, pp. 1228--1238, Mar. 2012.

\bibitem{JR:Layer_transmission4}
M.~Xing, S.~Xiang, and L.~Cai, ``{A Real-Time Adaptive Algorithm for Video
  Streaming over Multiple Wireless Access Networks},'' \emph{IEEE J. Select.
  Areas Commun.}, vol.~32, Apr. 2014.

\bibitem{JR:Layer_transmission3}
T.~Nguyen, P.~Cosman, and L.~Milstein, ``{Double-Layer Video Transmission Over
  Decode-and-Forward Wireless Relay Networks Using Hierarchical Modulation},''
  \emph{IEEE Trans. Image Process.}, vol.~23, pp. 1791--1804, Apr. 2014.

\bibitem{JR:layer_new_1}
M.~Chowdhury, T.~Nguyen, Y.-I. Kim, W.~Ryu, and Y.~Jang, ``{Radio Resource
  Allocation for Scalable Video Services Over Wireless Cellular Networks},''
  \emph{{Wireless Personal Communications}}, vol.~74, pp. 1061--1079, Feb.
  2014.

\bibitem{JR:layer_new_2}
W.-H. Kuo, W.~Liao, and T.~Liu, ``{Adaptive Resource Allocation for
  Layer-Encoded IPTV Multicasting in IEEE 802.16 WiMAX Wireless Networks},''
  \emph{{IEEE Trans. Multimedia}}, vol.~13, pp. 116--124, Feb. 2011.

\bibitem{CN:new_layer_2}
M.~Attia, M.~Shaqfeh, K.~Seddik, and H.~Alnuweiri, ``{Optimal Power Allocation
  for Layered Broadcast Over Amplify-and-Forward Relay Channels},'' in
  \emph{Proc. IEEE Global Conf. on Signal and Inf. Process.}, Dec. 2014, pp.
  727--731.

\bibitem{CN:new_layer_1}
------, ``{Power Optimization for Layered Transmission Over Decode-and-Forward
  Relay Channels},'' in \emph{Proc. International Wireless Commun. and Mobile
  Computing Conf., 2014}, Aug. 2014, pp. 594--599.

\bibitem{JR:layer_new_3}
M.~Condoluci, G.~Araniti, A.~Molinaro, and A.~Iera, ``{Multicast Resource
  Allocation Enhanced by Channel State Feedbacks for Multiple Scalable Video
  Coding Streams in LTE Networks},'' \emph{IEEE Trans. Veh. Technol.}, vol.~PP,
  no.~99, pp. 1--1, 2015.

\bibitem{book:david_wirelss_com}
D.~Tse and P.~Viswanath, \emph{{Fundamentals of Wireless Communication}},
  1st~ed.\hskip 1em plus 0.5em minus 0.4em\relax {Cambridge University Press},
  2005.

\bibitem{JR:TWC_large_antennas}
D.~W.~K. Ng, E.~Lo, and R.~Schober, ``{Energy-Efficient Resource Allocation in
  OFDMA Systems with Large Numbers of Base Station Antennas},'' \emph{IEEE
  Trans. Wireless Commun.}, vol.~11, pp. 3292--3304, Sep. 2012.

\bibitem{JR:MIMO_layered}
K.~Bhattad, K.~Narayanan, and G.~Caire, ``{On the Distortion SNR Exponent of
  Some Layered Transmission Schemes},'' \emph{IEEE Trans. Inf. Theory},
  vol.~54, pp. 2943--2958, Jul. 2008.

\bibitem{JR:MIMO_layered2}
D.~Song and C.~W. Chen, ``{Scalable H.264/AVC Video Transmission Over MIMO
  Wireless Systems with Adaptive Channel Selection Based on Partial Channel
  Information},'' \emph{IEEE Trans. Circuits Syst. Video Technol.}, vol.~17,
  pp. 1218--1226, Sep. 2007.

\bibitem{CN:MISO_layer}
J.~Xuan, S.~H. Lee, and S.~Vishwanath, ``{Broadcast Strategies for MISO and
  Multiple Access Channels},'' in \emph{Proc. IEEE Personal, Indoor and Mobile
  Radio Commun. Sympos.}, Sep. 2007, pp. 1--4.

\bibitem{JR:CR_overview}
Y.-C. Liang, K.-C. Chen, G.~Li, and P.~Mahonen, ``{Cognitive Radio Networking
  and Communications: An Overview},'' \emph{IEEE Trans. Veh. Technol.},
  vol.~60, pp. 3386--3407, Sep. 2011.

\bibitem{JR:CR_1}
B.~Wang and K.~Liu, ``{Advances in Cognitive Radio Networks: A Survey},''
  \emph{IEEE J. Select. Areas Commun.}, vol.~5, pp. 5--23, Feb. 2011.

\bibitem{Report:CR}
``{Facilitating Opportunities for Flexible, Efficient, and Reliable Spectrum
  Use Employing Cognitive Radio Technologies},'' Federal Communications
  Commission, Tech. Rep., 2002, {FCC 02-155}.

\bibitem{Report:Wire_tap}
A.~D. Wyner, ``{The Wire-Tap Channel},'' Tech. Rep., Oct. 1975.

\bibitem{JR:Artifical_Noise1}
S.~Goel and R.~Negi, ``{Guaranteeing Secrecy using Artificial Noise},''
  \emph{IEEE Trans. Wireless Commun.}, vol.~7, pp. 2180--2189, Jun. 2008.

\bibitem{JR:Kwan_physical_layer}
D.~W.~K. Ng, E.~S. Lo, and R.~Schober, ``{Secure Resource Allocation and
  Scheduling for OFDMA Decode-and-Forward Relay Networks},'' \emph{IEEE Trans.
  Wireless Commun.}, vol.~10, pp. 3528--3540, Aug. 2011.

\bibitem{JR:Kwan_EE_PHY}
D.~Ng, E.~Lo, and R.~Schober, ``{Energy-Efficient Resource Allocation for
  Secure OFDMA Systems},'' \emph{IEEE Trans. Veh. Technol.}, vol.~61, pp.
  2572--2585, Jul. 2012.

\bibitem{JR:CR_PHY_tuts}
R.~Sharma and D.~Rawat, ``{Advances on Security Threats and Countermeasures for
  Cognitive Radio Networks: A Survey},'' \emph{IEEE Commun. Surveys Tuts.},
  vol.~17, pp. 1023--1043, 2015.

\bibitem{JR:CR_PHY_maga}
Y.~Zou, J.~Zhu, L.~Yang, Y.-C. Liang, and Y.-D. Yao, ``{Securing Physical-Layer
  Communications for Cognitive Radio Networks},'' \emph{IEEE Commun. Mag.},
  vol.~53, pp. 48--54, Sep. 2015.

\bibitem{JR:CR_PHY_1}
M.~Elkashlan, L.~Wang, T.~Duong, G.~Karagiannidis, and A.~Nallanathan, ``{On
  the Security of Cognitive Radio Networks},'' \emph{IEEE Trans. Veh.
  Technol.}, vol.~64, pp. 3790--3795, Aug. 2015.

\bibitem{JR:CR_PHY_2}
B.~Fang, Z.~Qian, W.~Shao, and W.~Zhong, ``{Joint Precoding and Artificial
  Noise Design for Cognitive MIMOME Wiretap Channels},'' \emph{IEEE Trans. Veh.
  Technol.}, vol.~PP, no.~99, 2015.

\bibitem{JR:CR_PHY_3}
F.~Zhu and M.~Yao, ``{Improving Physical Layer Security for CRNs Using
  SINR-Based Cooperative Beamforming},'' \emph{IEEE Trans. Veh. Technol.},
  vol.~PP, no.~99, 2015.

\bibitem{JR:PHY_CR}
Y.~Pei, Y.-C. Liang, K.~C. Teh, and K.~H. Li, ``{Secure Communication in
  Multiantenna Cognitive Radio Networks With Imperfect Channel State
  Information},'' \emph{IEEE Trans. Signal Process.}, vol.~59, pp. 1683--1693,
  Apr. 2011.

\bibitem{JR:PHY_CR2}
C.~Wang and H.-M. Wang, ``{On the Secrecy Throughput Maximization for MISO
  Cognitive Radio Network in Slow Fading Channels},'' \emph{{IEEE Trans. Inf.
  Forensics Security}}, vol.~9, pp. 1814--1827, Nov. 2014.

\bibitem{JR:PHY_CR3}
Y.~Zou, X.~Li, and Y.~C. Liang, ``{Secrecy Outage and Diversity Analysis of
  Cognitive Radio Systems},'' \emph{IEEE J. Select. Areas Commun.}, vol.~32,
  pp. 2222--2236, Nov. 2014.

\bibitem{JR:MOOP}
D.~W.~K. Ng, E.~S. Lo, and R.~Schober, ``{Multi-Objective Resource Allocation
  for Secure Communication in Cognitive Radio Networks with Wireless
  Information and Power Transfer},'' \emph{IEEE Trans. Veh. Technol.}, vol.~PP,
  2015.

\bibitem{JR:CR_beamforming_1}
L.~Zhang, Y.-C. Liang, Y.~Xin, and H.~Poor, ``{Robust Cognitive Beamforming
  with Partial Channel State Information},'' \emph{IEEE Trans. Wireless
  Commun.}, vol.~8, pp. 4143--4153, Aug. 2009.

\bibitem{JR:CR_phy}
Y.~Pei, Y.-C. Liang, L.~Zhang, K.~Teh, and K.~H. Li, ``{Secure Communication
  Over MISO Cognitive Radio Channels},'' \emph{IEEE Trans. Wireless Commun.},
  vol.~9, pp. 1494--1502, Apr. 2010.

\bibitem{CN:karama}
K.~Hamdi, W.~Zhang, and K.~Letaief, ``{Joint Beamforming and Scheduling in
  Cognitive Radio Networks},'' in \emph{Proc. IEEE Global Telecommun. Conf.},
  Nov. 2007, pp. 2977--2981.

\bibitem{JR:Robust_error_models1}
G.~Zheng, K.~K. Wong, and T.~S. Ng, ``{Robust Linear MIMO in the Downlink: A
  Worst-Case Optimization with Ellipsoidal Uncertainty Regions},''
  \emph{EURASIP J. Adv. Signal Process.}, vol. 2008, 2008, {Article ID 609028}.

\bibitem{JR:Robust_error_models2}
C.~Shen, T.-H. Chang, K.-Y. Wang, Z.~Qiu, and C.-Y. Chi, ``{Distributed Robust
  Multicell Coordinated Beamforming With Imperfect CSI: An ADMM Approach},''
  \emph{IEEE Trans. Signal Process.}, vol.~60, pp. 2988--3003, Jun. 2012.

\bibitem{JR:CSI-determinisitic-model}
N.~Vucic and H.~Boche, ``{Robust QoS-Constrained Optimization of Downlink
  Multiuser MISO Systems},'' \emph{IEEE Trans. Signal Process.}, vol.~57, pp.
  714--725, Feb. 2009.

\bibitem{JR:Kwan_secure_imperfect}
D.~W.~K. Ng, E.~S. Lo, and R.~Schober, ``{Robust Beamforming for Secure
  Communication in Systems with Wireless Information and Power Transfer},''
  \emph{IEEE Trans. Wireless Commun.}, vol.~13, pp. 4599--4615, Aug. 2014.

\bibitem{JR:SWIPT_DAS}
D.~Ng and R.~Schober, ``{Secure and Green SWIPT in Distributed Antenna Networks
  With Limited Backhaul Capacity},'' \emph{IEEE Trans. Wireless Commun.},
  vol.~14, pp. 5082--5097, Sep. 2015.

\bibitem{JR:MIMO_Robust_CR}
J.~Wang, G.~Scutari, and D.~Palomar, ``{Robust MIMO Cognitive Radio Via Game
  Theory},'' \emph{IEEE Trans. Signal Process.}, no.~3, Mar. 2011.

\bibitem{JR:Kwan_AF}
D.~Ng and R.~Schober, ``{Cross-Layer Scheduling for OFDMA Amplify-and-Forward
  Relay Networks},'' \emph{IEEE Trans. Veh. Technol.}, vol.~59, pp. 1443--1458,
  Mar. 2010.

\bibitem{JR:scheduling_policy}
J.~Zeng and H.~Minn, ``{A Novel OFDMA Ranging Method Exploiting Multiuser
  Diversity},'' \emph{IEEE Trans. Commun.}, vol.~58, pp. 945--955, Mar. 2010.

\bibitem{JR:scheduling_policy2}
P.~Chan and R.~Cheng, ``{Capacity Maximization for Zero-Forcing MIMO-OFDMA
  Downlink Systems with Multiuser Diversity},'' \emph{IEEE Trans. Wireless
  Commun.}, vol.~6, pp. 1880--1889, May 2007.

\bibitem{book:convex}
S.~Boyd and L.~Vandenberghe, \emph{{Convex Optimization}}.\hskip 1em plus 0.5em
  minus 0.4em\relax {Cambridge University Press}, 2004.

\bibitem{JR:extended_S_procedure}
Z.~Q. Luo, J.~Sturm, and S.~Zhang, ``{Multivariate Nonnegative Quadratic
  Mappings},'' \emph{SIAM J. Optim.}, vol.~14, pp. 1140--1162, Jul. 2004.

\bibitem{JR:Ken_AN_PHY}
Q.~Li and W.-K. Ma, ``{Spatially Selective Artificial-Noise Aided Transmit
  Optimization for MISO Multi-Eves Secrecy Rate Maximization},'' \emph{IEEE
  Trans. Signal Process.}, vol.~61, pp. 2704--2717, May 2013.

\bibitem{website:CVX}
M.~Grant and S.~Boyd, ``{CVX: Matlab Software for Disciplined Convex
  Programming, version 2.0 Beta},'' [Online] \url{https://cvxr.com/cvx}, Sep.
  2012.

\bibitem{JR:rui_zhang}
L.~Liu, R.~Zhang, and K.-C. Chua, ``{Secrecy Wireless Information and Power
  Transfer with MISO Beamforming},'' \emph{IEEE Trans. Signal Process.},
  vol.~62, pp. 1850--1863, Apr. 2014.

\bibitem{JR:Gaussian_randomization}
N.~Sidiropoulos, T.~Davidson, and Z.-Q. Luo, ``{Transmit Beamforming for
  Physical-Layer Multicasting},'' \emph{IEEE Trans. Signal Process.}, vol.~54,
  pp. 2239--2251, Jun. 2006.

\bibitem{book:interior_point_complexity}
I.~P\'olik and T.~Terlaky, ``{Interior Point Methods for Nonlinear
  Optimization},'' in \emph{{Nonlinear Optimization}}, 1st~ed., G.~D. Pillo and
  F.~Schoen, Eds.\hskip 1em plus 0.5em minus 0.4em\relax {Springer}, 2010,
  ch.~4.

\bibitem{book:polynoimal}
T.~H. Cormen, C.~E. Leiserson, R.~L. Rivest, and C.~Stein, \emph{{Introduction
  to Algorithms}}, 3rd~ed.\hskip 1em plus 0.5em minus 0.4em\relax {The MIT
  Press}, 2009.

\bibitem{CN:two_layer}
M.~Attia, M.~Shaqfeh, K.~Seddik, and H.~Alnuweiri, ``{Power Optimization for
  Layered Transmission Over Decode-And-Forward Relay Channels},'' in
  \emph{Intern. Wireless Commun. and Mobile Computing Conf.}, Aug 2014, pp.
  594--599.

\bibitem{CN:two_layer_1}
C.~Bi and J.~Liang, ``{Layered Source-Channel Coding over Two-Way Relay
  Networks},'' in \emph{Proc. IEEE Veh. Techn. Conf.}, Sep. 2013, pp. 1--5.

\bibitem{JR:two_layer}
U.~Sethakaset, T.~Quek, and S.~Sun, ``{Joint Source-Channel Optimization over
  Wireless Relay Networks},'' \emph{IEEE Trans. Commun.}, vol.~59, pp.
  1114--1122, Apr. 2011.

\bibitem{3Gpp:09}
``{Evolved Universal Terrestrial Radio Access (E-Utra); Further Advancements
  for E-Utra Physical Layer Aspects},'' {3GPP TR 36.814 V9.0.0 (2010-03)},
  Tech. Rep.

\end{thebibliography}
\end{document}